\newcommand{\bzero}{\mbox{\boldmath{$0$}}}
\newcommand{\bA}{\mbox{\boldmath{$A$}}}
\newcommand{\ba}{\mbox{\boldmath{$a$}}}
\newcommand{\bb}{\mbox{\boldmath{$b$}}}
\newcommand{\bB}{\mbox{\boldmath{$B$}}}
\newcommand{\bc}{\mbox{\boldmath{$c$}}}
\newcommand{\bC}{\mbox{\boldmath{$C$}}}
\newcommand{\bg}{\mbox{\boldmath{$g$}}}
\newcommand{\bH}{\mbox{\boldmath{$H$}}}
\newcommand{\bI}{\mbox{\boldmath{$I$}}}
\newcommand{\bp}{\mbox{\boldmath{$p$}}}
\newcommand{\bq}{\mbox{\boldmath{$q$}}}
\newcommand{\bU}{\mbox{\boldmath{$U$}}}
\newcommand{\bX}{\mbox{\boldmath{$X$}}}
\newcommand{\bx}{\mbox{\boldmath{$x$}}}
\newcommand{\by}{\mbox{\boldmath{$y$}}}
\newcommand{\bz}{\mbox{\boldmath{$z$}}}
\newcommand{\diag}{\mbox{\boldmath\bf diag}\, }
\newtheorem{theorem}{Theorem}[section]
\newtheorem{lemma}[theorem]{Lemma}
\newtheorem{proposition}[theorem]{Proposition}
\title{Enhanced Target Localization with Deployable Multiplatform Radar Nodes Based on Non-Convex Constrained Least Squares Optimization}
\author{Augusto Aubry, \emph{Senior Member, IEEE}, Paolo Braca, \emph{Senior Member, IEEE}, Antonio De Maio, \emph{Fellow, IEEE}, and Angela Marino, \emph{Student Member, IEEE}

 \thanks{A.~Aubry, A.~De~Maio (corresponding author), and A. Marino are with the Department of Electrical and Information Technology Engineering, University of Naples Federico II, $80125$ Napoli, Italy.
 Email: augusto.aubry@unina.it, ademaio@unina.it, angela.marino@unina.it.}
 \thanks{P.~Braca is with Centre for Maritime Research and Experimentation (CMRE).
Email: paolo.braca@cmre.nato.int
}
 }
\begin{document}

\maketitle

\begin{abstract}
A new algorithm for 3D localization in multiplatform radar networks, comprising one transmitter and multiple receivers, is proposed. To take advantage of the monostatic sensor radiation pattern features, ad-hoc constraints are imposed in the target localization process. Therefore, the localization problem is formulated as a non-convex constrained Least Squares (LS) optimization problem which is globally solved in a quasi-closed-form leveraging Karush-Kuhn-Tucker (KKT) conditions. The performance of the new algorithm is assessed in terms of Root Mean Square Error (RMSE) in comparison with the benchmark Cramer Rao Lower Bound (CRLB) and some competitors from the open literature. The results corroborate the effectiveness of the new strategy which is capable of ensuring a lower RMSE than the counterpart methodologies especially in the low Signal to Noise Ratio (SNR) regime.
\end{abstract}

\begin{IEEEkeywords}

Multistatic System, Active Radar, Bistatic Measurements, Monostatic Measurements, Constrained Least Squares Estimation, Non-Convex Optimization.
\end{IEEEkeywords}

\section{Introduction}\label{Introduction}
{{Multiplatform radar networks (MPRNs) exhibit  significant  advantages over the monostatic radar configuration and have received great attention in the recent past \cite{sensors_as_robots}.   Employing a constellation of multiple 
deployable platforms allows to enlarge the surveillance area, to improve data reliability and accuracy, to enhance the fault tolerance, and to improve the data utilization of the system. 
Furthermore, such multistatic configurations exploiting spatial diversity can grant improved target detectability, especially against  low-observable and stealth targets \cite{mimo_book,bistatic_book}.  Last but not least, MPRNs also endow a better resistance to electronic countermeasures, such as focused jamming \cite{multistatic_jamming}, and can reduce the deleterious effects of shadowing, obscuring large angular sections of the coverage region.
MPRNs can be classified in two main groups \cite{multiplatform_workshop}. The first group can be defined as a constellation of multiple Autonomous Radio Frequency (RF) sensor Nodes (A-RFN) working autonomously towards a specific task.  The second group embraces a set of Cooperating Radio Frequency sensor Nodes (C-RFN) which work together to achieve a common goal. The two groups exhibit different characteristics: A-RFNs are highly decentralized, do not demand (or require limited) communication among nodes, and are resilient to possible sensor failures. C-RFNs involve information sharing among the different nodes and the availability of a central sensor (or possibly multiple central sensors) performing the final processing. Needless to say, there is a strong dependence on the network infrastructure and, in particular on the synchronization protocol \cite{intelligent_radar_network}. In this respect, it is worth pointing out that many synchronization and geolocation issues that were previously critical are now affordable  using Global Positioning System (GPS) and highly stable GPS Disciplined Oscillators (GPSDOs) at each node of the network \cite{sync,sync_netted_radar,sync_oscillators}. Besides, techniques to grant synchronization in a GNSS-denied environment are also available \cite{sync_timing}.
As to the control of C-RFNs, which includes resource allocation and sensor management, it represents a major demand (especially for moving platforms) together with resource allocation and sensor management \cite{sensor_management,Distributed_MultiPlatform,quality_of_service}. Not surprisingly, C-RFNs are in general less robust than A-RFM even if a cooperative protocol can theoretically grant superior performance than the autonomous setup. For instance, the accuracy of target positioning is strongly influenced by the network geometrical configuration and even more by the baseline lengths and orientations. This geometric diversity can be deemed as the key ingredient for performance enhancement thanks to its inherent flexibility and the possibility to optimize dynamically the number and the locations of the individual platforms \cite{adaptive_radar_network}. Remarkably, it paves the way to new opportunities and challenges connected with the chance of using low cost receiving units (possibly expendable and heterogeneous) such as Unmanned Aerial Vehicles (UAV) deployable on the base of a specific task.}}

Performance gains in multiplatform systems due to waveform and frequency diversity are discussed in \cite{multiplatform_air_surveillance}. Therein
examples of western multiplatform systems are presented together with important standardization and interoperability issues.
In \cite{Adaptive_Waveform_Multistatic} and \cite{Adaptive_Waveform_Multistatic_ieee} the problem of adaptive waveform selection for target tracking by a multistatic radar system consisting of a dedicated transmitter and multiple receivers is considered. Prototypes of ground-based multistatic sensors with one transmitter and up to three receive nodes (a monostatic-pair plus two widely-spaced bistatic pairs) are
NetRAD \cite{NetRad,NETRAD_UAV} and its evolution NextRAD \cite{nextrad,nextrad2} capable of performing polarimetric acquisitions. NetRAD has been also successfully used to detect and track UAV with very low radial velocity \cite{NETRAD_UAV}.

Target localization with a multistatic radar is  addressed in \cite{passive_multistatic}, where two methods for calculating the Cartesian position are  presented resorting to Spherical-Interpolation
(SI) and Spherical-Intersection (SX).
A localization scheme exploiting both Time Of Arrival (TOA) from the transmitter to a specific receiver and Angle Of Arrival (AOA)   is
proposed in \cite{AOA_multistatic}, 
that applies the weighted least squares method to estimate the target location and shows that the RMSE decreases as the number of
multistatic radar receivers increases under the assumption of Gaussian measurement errors.
 
An improved method for moving target localization with a noncoherent Multiple-Input Multiple-Output (MIMO) radar system having widely separated antennas is proposed in \cite{localization_moving_target_MIMO_noncoherent}. Specifically, the proposed method is based on the Two-Stage Weighted Least Squares (2SWLS), and a closed-form solution is derived \cite{localization_moving_target_MIMO_noncoherent}. In \cite{localization_moving_target_MIMO}, for the same problem of moving target localization, the authors propose two methods, in which the parameters used are the joint of AOA, Frequency-Of-Arrival (FOA) and TOA.

\raggedbottom

This paper proposes a novel approach for 3D localization in multiplatform systems with one transmitter and multiple receivers.  At the  design phase, 
angular constraints are forced on the target position to capitalize the information embedded into the characteristics of the monostatic radiation pattern\footnote{Angular constraints, induced by the radiation pattern, have been already explored and proved effective in other localization systems for 2D scenarios \cite{acls,aacls,AACLS_journal}. }. Therefore, localization is formulated as a constrained Least Squares (LS) problem whose optimal solution provides the Cartesian coordinates of the target. The resulting non-convex optimization is efficiently handled invoking the Karush-Kuhn-Tucker (KKT) optimality conditions \cite{Bertsekas16}. Specifically, a quasi-closed-form
global optimal solution, i.e., depending only on elementary functions and roots of polynomial equations, is computed leveraging to an ad-hoc partition of the feasible set as well as the regularity of its points.  
The proposed localization method, referred as Angular and Range Constrained Estimator (ARCE), is tested in different illustrative examples, and compared with some counterparts, e.g., the Unconstrained TDOA Like (U-TDOA), the Range-Only Constrained Estimator, the Two-Step Estimation-$1$ (TSE-$1$) \cite{TSE}, extended to the 3D scenario, and Two-Step Estimation-$2$ (TSE-$2$) \cite{Multistatic_joint_localization}, specialized for known transmitter position.

The paper is organized as follows. Section \ref{system model} introduces the system model and defines the constraints bestowed by the monostatic system radiation characteristics. Section \ref{Problem Formulation} formulates the constrained LS estimation problem and presents the algorithm yielding a quasi-closed-form solution. Section \ref{analysis} deals with the performance analysis and comparisons. Section \ref{conclusions} draws some conclusions and highlights some possible future research avenues.

\subsection{Notation}

We adopt the notation of using boldface for vectors $\ba$ (lower case), and matrices $\bA$ (upper case). The $n$-th element of $\bm{a}$ and the $(m,n)$-th entry of $\bm{A}$ are denoted by $a_n$ and $\bm{A}_{m,n}$, respectively. The symbols $(\cdot)^T$  indicates the transpose  operator. 
$\lceil\cdot\rceil$ denotes the
operation of rounding up to the nearest integer.
$\bA^\dagger$ represents the Moore-Penrose inverse of the matrix $\bA$.
$\bI$ and ${\bf 0}$ denote respectively the identity matrix and the matrix with zero entries (their size is determined from the context). $\mathbf{1}_N$ and
$\mathbf{0}_N$ are $N$-length vectors of ones and zeros. ${\mathbb{R}}^N$, ${\mathbb{R}}^{N,M}$,  and ${\mathbb{S}}^N$ are respectively the sets of $N$-dimensional vectors of real numbers, of $N\times M$ real matrices, and of $N\times N$ symmetric matrices. $\diag(\ba)$ indicates the diagonal matrix whose $i$-th diagonal element is the $i$-th entry of $\ba$.
The 
symbol $\succeq$ (and its strict form $\succ$) is used to indicate generalized matrix inequality: for any $\bA\in{\mathbb{S}}^N$, $\bA\succeq\bzero$ means that $\bA$ is a positive semi-definite matrix ($\bA\succ\bzero$ for positive definiteness). $\lambda_1(\bX), \ldots, \lambda_N(\bX)$, with $\lambda_1(\bX)\geq \ldots \geq \lambda_N(\bX)$, denote the  eigenvalues of $\bX\in {\mathbb{S}}^N$, arranged in decreasing order.   
The Euclidean norm of the vector $\bx \in \mathbb{R}^N$ is denoted as $\|\bx\|$. 

\section{System Model}\label{system model}
Let us consider a multistatic radar network with an active sensor and $N$ receivers, as illustrated in Fig. \ref{fig:scenario} and denote by:

\begin{itemize}
\item $\bp=[x_p,y_p,z_p]^T\in\mathbb{R}^3$ the target position;
\item $\bp_{r_0}=[x_0,y_0,z_0]^T\in\mathbb{R}^3$ the active radar position (without loss of generality, it is assumed coinciding with the reference system origin, i.e., $\bp_{r_0}=[0,0,0]^T$);
\item ${\bp_r}_i = [x_{r_i},y_{r_i},z_{r_i}]^T\in\mathbb{R}^3$ the position of the $i$-th receiver, $i=1,\ldots,N$.

\end{itemize}
\begin{figure}[hb]
    \centering
    \includegraphics[width=0.8\columnwidth]{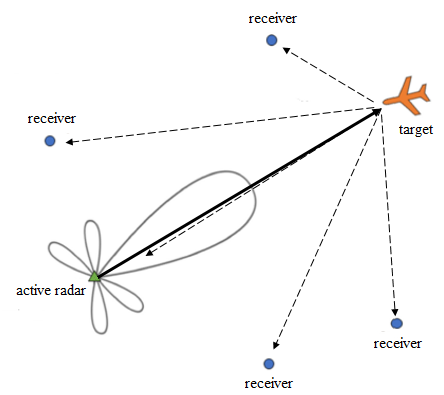}
        \caption{Pictorical representation of a surveillance system including a monostatic radar and $N=4$ receiver nodes.}
    \label{fig:scenario}
\end{figure}

Letting 
\begin{equation}\label{model_equation}
\begin{split}
  \tilde{\tau}_i= \frac{1}{c} \left(\|\bp\| + \| \bp - {\bp_r}_i \|\right),~ i=0,\ldots,N
\end{split}
\end{equation}
the noise-free output of the cross-correlation
based processing (with $c$ the speed of light) at the $i$-th bistatic (or monostatic, if $i=0$) pair, the following $N+1$ noisy delay measurements are available at the $i$-th receiver

\begin{equation}\label{bistatic_measurement}
  \tau_i= \tilde{\tau}_i + n_i, ~i=0,\ldots,N. 
\end{equation}
They are sent to the active radar that plays the role of a fusion node,  to determine an estimate of the target position.
In \eqref{bistatic_measurement}, $n_0,\ldots,n_N$ are statistically independent zero-mean
(usually Gaussian distributed) random variables with variance $\sigma^2_0,\ldots,\sigma^2_N$ given by

\begin{equation}\label{variance_bistatic_measurements}
    \sigma_i = \frac{1}{B \sqrt{2\mbox{SNR}_i} },~i =0,\ldots,N
\end{equation}
where $B$ represents the frequency bandwidth of the $i$-th receiver and $\mbox{SNR}_i$ denotes the signal to noise ratio (SNR) of the $i$-th bistatic pair (i.e., radar/$i$-th receiver) or, if $i=0$, of the monostatic radar, computed  via the bistatic and monostatic radar range equation \cite{G_trus_region,Richards}, respectively.
Now, elaborating on \eqref{model_equation}, it is possible to get an equivalent form which is fundamental for the development of the proposed estimation algorithm. To this end, let 
\begin{equation}\label{b_i}
b_i=c{\tilde\tau}_i - \frac{c{\tilde\tau}_0}{2}, \, \, \quad i=1,\ldots, N \, ,
\end{equation}
and (for $i=0$)
\begin{equation}\label{b_0}
b_0= \frac{c{\tilde\tau}_0}{2}. 
\end{equation}
Equation \eqref{model_equation} can be recast as:

\begin{equation}
\begin{split}
  & \|\bp\|^2 -2x_p x_{r_i} - 2y_p y_{r_i} +\\
&-2z_p z_{r_i} + \|\bp_{r_i}\|^2= b_i^2 \, \, \quad i=1,\ldots, N \, , 
\end{split}
\end{equation}
which is equivalent to
\begin{equation} \label{model_equation_2}
\begin{cases}
- 2x_p x_{r_i} - 2y_p y_{r_i} - 2z_p z_{r_i} - g_i = 0\\
g_i = b_i^2 - b_0^2 - x_{r_i}^2 - y_{r_i}^2 - z_{r_i}^2  \, \, \quad i=1,\ldots, N,  \\
b_0 = \sqrt{x_p^2+y_p^2+z_p^2}
\end{cases}
\end{equation}
where it is assumed $b_i \geq 0$, $i=1,\ldots,N$.
All the relationships described in \eqref{model_equation_2} can be grouped in a more compact matrix form as

\begin{equation}\label{model_equation_3}
\begin{cases}
\bH {\bp} - \bg = {\bf 0}\\
\bp^T \bp = {b^2_0}\\
\end{cases}
\end{equation}

where 
\begin{align*}
&\bH^T = \left[\textit{h}_1,\textit
{h}_2,\ldots,\textit{h}_N\right] \in \mathbb{R}^{3,N}, \mbox{with} \\
&\textit{h}_i = \left[-2 x_{r_i}, -2 y_{r_i},-2 z_{r_i} \right]^T \in \mathbb{R}^3\,,\,\,i=1,\ldots, N\\
&\bg = \left[\begin{matrix} g_1, \ldots, g_N \end{matrix}\right]^T \in \mathbb{R}^N, \mbox{with} \\
&\textit{g}_i = b_i^2-b_0^2 -  x_{r_i}^2 - y_{r_i}^2 - z_{r_i}^2  \,,\,\,i=1,\ldots, N.    
\end{align*}

\subsection{Monostatic Acquisition System and Target Position Constraints}\label{MB_constraint}
To perform the measurement process, the active radar employs an antenna characterized by a specific transmit/receive beampattern with a given main-lobe width and pointing direction, without loss of generality, coincident with the $x$-axis of the reference system. In this subsection, some constraints able to capitalize such a-priori information are formalized with the goal of improving target positioning reliability. To this end, let us denote by:
\begin{itemize}
\item $\bar{\theta}$ and $\bar{\phi}$ the  (half-side) antenna beamwidths {in the $x-y$ and $x-z$ plane, respectively}, as shown in Fig. \ref{fig:beam};
\item $\theta_p = \text{atan2}(y_p,x_p)$ and $\phi_p = \text{atan2}(z_p,x_p)$ the {azimuth (in the $x-y$ plane) and elevation (in the $x-z$ plane)} target angular coordinates, respectively. 
\end{itemize}
\begin{figure}[!h]
    \centering
    \includegraphics[width=0.7\columnwidth]{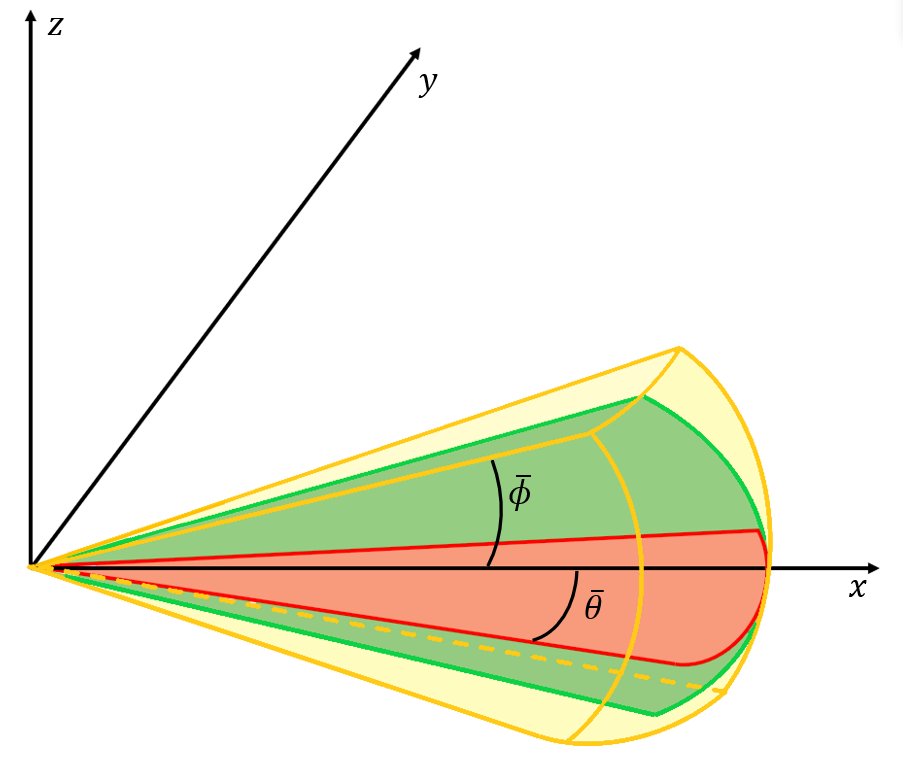}
        \caption{Representation of the antenna beamwidth.}
    \label{fig:beam}
\end{figure}
Hence, let us observe that the limited main-lobe extension of the active radar demands the angular location of any illuminated target to comply with
\begin{align}\label{eq_ran1}
\begin{cases}
-\bar{\theta} \leq \theta_p \leq \bar{\theta}\\
-\bar{\phi} \leq \phi_p \leq \bar{\phi}
\end{cases}.
\end{align}
The relationship \eqref{eq_ran1} coupled with the assumptions $0 \leq \bar{\theta} < \frac{\pi}{2}$ and 0 $\leq \bar{\phi} < \frac{\pi}{2}$, can be equivalently rewritten as

\begin{align}\label{eq_ran2}
\begin{cases}
-\tan \bar{\theta} \leq \tan \left( \theta_p \right) \leq \tan \bar{\theta}\\
-\tan \bar{\phi} \leq \tan \left( \phi_p \right) \leq \tan \bar{\phi}
\end{cases},
\end{align}
which boils down to
\begin{equation}\label{eq:constraintangle2}
\begin{cases}
- x_{p} \gamma_a \leq y_{p} \leq x_{p} \gamma_a\\
- x_{p} \gamma_a \leq z_{p} \leq x_{p} \gamma_e\\
x_{p} \geq 0\\
\end{cases},
\end{equation}
where $\gamma_a = \tan \bar{\theta}$ and $\gamma_e = \tan \bar{\phi}$.
It is worth pointing out that, with reference to a 2D problem, angular constraints have been also used in \cite{acls} for target localization with a passive radar  and in \cite{aacls,AACLS_journal} to realize a passive radar positioning aided by some a-priori information provided by an active radar sensing system.
\raggedbottom

\section{Problem Formulation and 3D Localization Algorithm}\label{Problem Formulation}
This section deals with the formalization of the localization problem and the development of the resulting estimation technique. To this end, both model equations \eqref{model_equation_3} and constraints \eqref{eq:constraintangle2}, induced by the monostatic acquisition system, are exploited. In this respect, it is important to highlight that the matrix $\bH$, the vector $\bg$, and the target range $b_0$ are corrupted by noise, and, consequently, the relationships in \eqref{model_equation_3} are not exactly satisfied. This issue is handled resorting to the constrained LS framework, forcing the target range to be the projection of the range measurement within the detected range-bin, and looking for the
best fitting of the model to observations according to a squared norm cost function. Specifically, indicating the mentioned projection by $\bar{b}_0=\max(\min(b_0,r_U), r_L)$ with $r_L$ and $r_U$ the extremes of the detected range-bin, the target positioning process can be formalized as the following non-convex optimization problem

\begin{subequations}\label{eq_prob_comp2}
\begin{align}[left = {\cal P} \empheqlbrace]
{\displaystyle \min_{{\bp}}} & {\left\| \bH {\bp} - \bg \right\|}^2\\
\mbox{s.t.} & \|\bp\|^2= \bar{b}_0^2 \label{eq:constraint1} \\ 
            & - {x}_p \gamma_a \leq {y}_p \leq {x}_p \gamma_a \label{eq:constraint2}\\ 
            & - {x}_p \gamma_e \leq {z}_p \leq {x}_p \gamma_e \label{eq:constraint3}\\ 
            & {x}_p \geq 0 \label{eq:constraint4}
\end{align}
\end{subequations}
 Although {$\cal P$} is difficult to solve, through the use of optimization techniques based on KKT optimality \cite{Bertsekas16} conditions, a quasi-closed-form (i.e., whose computation involves only elementary functions and roots of polynomial equations) optimal solution can be derived.
Indeed, the following proposition holds true.

\begin{proposition}\label{Prop1}
An optimal solution to ${\cal P}$ belongs to the following finite set of feasible points (whose cardinality is at
most twenty-six):
\begin{enumerate}

\item ${\bx}^\star(\bar\lambda_h) = \left({\bH}^T{\bH} + \bar\lambda_h \bI\right)^{-1}{\bH}^T\bg, ~h \in I_1 \subseteq \{1,\ldots,6\}$, with $\bar\lambda_h$ the real-valued solutions to the sixth-order equation

\begin{equation}\label{6_order_equation}
{\bx}^\star(\bar\lambda)^T {\bx}^\star(\bar\lambda) =\bar{b}_0^2
\end{equation}

such that

\begin{equation} \label{constraint_beck1}
\begin{cases}
- \gamma_a {x}_p^\star(\bar\lambda_h) < {y}_p^\star(\bar\lambda_h) < \gamma_a  {x}_p^\star(\bar\lambda_h)\\
- \gamma_e {x}_p^\star(\bar\lambda_h) < {z}_p^\star(\bar\lambda_h) <\gamma_e  {x}_p^\star(\bar\lambda_h)\\
{x}_p^\star(\bar\lambda_h ) > 0
\end{cases}.
\end{equation}
\item $\bx^*(\beta^i_h) =[q_1^*(\beta^i_h), (-1)^{i+1} \gamma_a q_1^*(\beta^i_h), q_2^*(\beta^i_h)]^T, i =1,2$ with

\begin{equation}
    \bq^*(\beta^i_h) = \left( {\bH_i^a}^T{\bH_i^a} + \beta^i_h \bB^a\right)^{-1}{\bH_i^a}^T\bg
\end{equation}

where 
\begin{equation}
  {\bH_i^a} = \bH \left[\begin{matrix} 1 & 0 \\
(-1)^{i+1}\gamma_a & 0 \\
0 & 1
\end{matrix}\right], i=1,2,   
\end{equation}
\begin{equation}
 \bB^a  = \left[\begin{matrix} 1 +\gamma_a^2 & 0 \\0 & 1 \end{matrix}\right],
\end{equation}

and $\beta^i_h, ~h\in I^i_2 \subseteq \{1,\ldots,4\}  $ the real-valued solutions to the fourth-order equation 

\begin{equation}\label{4_order_equation_1}
  {\bq^*}^T(\beta^i) \bB^a \bq^*(\beta^i) = \bar{b}_0^2  
\end{equation}

such that 

\begin{equation} \label{constraint_beck2}
\begin{cases}
- \gamma_e {q}_1^\star(\beta^i_h) < {q}_2^\star(\beta^i_h) <\gamma_e  {q}_1^\star(\beta^i_h)\\
{q}_1^\star(\beta^i_h) > 0
\end{cases}.
\end{equation}

\item $\bx^*(\eta^i_h) =[p_1^*(\eta^i_h), p_2^*(\eta^i_h), (-1)^{i+1}\gamma_e p_1^*(\eta^i_h)]^T, i =1,2$ with

\begin{equation}
    \bp^*(\eta_h^i) = \left( {\bH_i^e}^T\bH^e_i + \eta^i_h \bB^e\right)^{-1}{\bH_i^e}^T \bg
\end{equation}

where  
\begin{align}
\bH^e_i = \bH \left[\begin{matrix} 1 & 0 \\
 0 & 1 \\
(-1)^{i+1}\gamma_e & 0
\end{matrix}\right], i= 1,2,
\end{align}
\begin{equation}
   \bB^e = \left[\begin{matrix} 1 +\gamma_e^2 & 0 \\0 & 1 \end{matrix}\right],
\end{equation}

and $\eta^i_h, ~h\in I^i_3 \subseteq \{1,\ldots,4\}$ the real-valued solutions to the fourth-order equation 

\begin{equation}\label{4_order_equation_2}
  {\bp^*}^T(\eta^i) \bB^e \bp^*(\eta^i) = \bar{b}_0^2
\end{equation}

such that

\begin{equation} \label{constraint_beck3}
\begin{cases}
- \gamma_a {p}_1^\star(\eta^i_h) < {p}_2^\star(\eta^i_h) <\gamma_a  {p}_1^\star(\eta^i_h)\\
{p}_1^\star(\eta^i_h) > 0
\end{cases}.
\end{equation}

\item $
  \bx^*_{4_{i,j}} =\alpha \left[ 1, (-1)^{1+i} \gamma_a ,  (-1)^{1+j} \gamma_e  \right]^T, (i, j) \in \{1, 2\}^2,$ with $\alpha = \frac{\bar{b}_0}{\sqrt{1+ \gamma_a^2 + \gamma_e^2}}.$

\end{enumerate}
 \end{proposition}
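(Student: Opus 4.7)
The plan is to apply the Karush-Kuhn-Tucker necessary conditions to $\mathcal{P}$ and exhaustively enumerate the possible active-set configurations of the four linear inequality constraints \eqref{eq:constraint2}--\eqref{eq:constraint4}. Since the sphere in \eqref{eq:constraint1} has gradient $2\bp$, which is nonzero on the feasible set (because $\bar{b}_0>0$), and the linear inequalities are in general position with the sphere, a constraint qualification (LICQ) holds at every feasible point, so every global minimizer must satisfy the KKT conditions. As a preliminary reduction, I would note that $x_p=0$ combined with \eqref{eq:constraint2}--\eqref{eq:constraint3} forces $y_p=z_p=0$, contradicting $\|\bp\|^2=\bar{b}_0^2>0$; hence the constraint \eqref{eq:constraint4} is inactive and $x_p>0$ may be assumed throughout.

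Next I would split the KKT analysis into four mutually exhaustive cases according to which of the angular inequalities in \eqref{eq:constraint2}--\eqref{eq:constraint3} are tight. In Case 1 none of them is active; only the sphere is binding. Stationarity of the Lagrangian $\|\bH\bp-\bg\|^2+\bar\lambda(\|\bp\|^2-\bar{b}_0^2)$ gives $\bp^\star(\bar\lambda)=(\bH^T\bH+\bar\lambda\bI)^{-1}\bH^T\bg$, and substituting into $\|\bp\|^2=\bar{b}_0^2$ produces \eqref{6_order_equation}. Diagonalizing $\bH^T\bH$, the equation $\sum_{k=1}^{3}\alpha_k/(\lambda_k+\bar\lambda)^2=\bar{b}_0^2$ is rational with a denominator of degree $6$, so after clearing denominators it becomes a polynomial of degree at most $6$ in $\bar\lambda$, delivering the set $I_1\subseteq\{1,\dots,6\}$ of admissible roots (those for which the corresponding $\bp^\star(\bar\lambda_h)$ satisfies the strict angular inequalities \eqref{constraint_beck1}).

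In Cases 2 and 3 exactly one of the angular inequalities is tight. By the symmetric structure it suffices to treat $y_p=(-1)^{i+1}\gamma_a x_p$, $i\in\{1,2\}$. Parametrizing $\bp=[q_1,(-1)^{i+1}\gamma_a q_1,q_2]^T$, the objective becomes $\|\bH_i^a\bq-\bg\|^2$ and the sphere becomes the quadratic $\bq^T\bB^a\bq=\bar{b}_0^2$. The corresponding KKT system is the generalized secular equation whose stationary point is $\bq^*(\beta^i)=(\bH_i^{a\,T}\bH_i^a+\beta^i\bB^a)^{-1}\bH_i^{a\,T}\bg$; since the $\bq$-space is two-dimensional, substituting into the sphere constraint yields, after clearing denominators, the fourth-order polynomial equation \eqref{4_order_equation_1}, producing at most four candidates per sign choice. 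Case 3 is handled identically with $z$ in place of $y$, using $\bH_i^e$ and $\bB^e$. In Case 4, two angular inequalities are simultaneously active (one from \eqref{eq:constraint2} and one from \eqref{eq:constraint3}); this fixes the direction $[1,(-1)^{i+1}\gamma_a,(-1)^{j+1}\gamma_e]^T$ up to scale, and the sphere with $x_p\ge 0$ uniquely determines the scalar $\alpha=\bar{b}_0/\sqrt{1+\gamma_a^2+\gamma_e^2}$, giving $2\times 2 = 4$ candidate points. Configurations with three or four active angular constraints would force $x_p=0$ and are excluded by the preliminary reduction. Adding the counts yields $6+2\cdot 4+2\cdot 4+4=26$.

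The main obstacle is a careful algebraic verification that each secular condition in Cases 1--3 reduces, after clearing denominators in the corresponding eigenbasis, to a polynomial of exactly the claimed degree rather than something larger, and that the KKT analysis does not miss any stationary point when $\bH^T\bH+\bar\lambda\bI$ (or its analogue $\bH_i^{a\,T}\bH_i^a+\beta^i\bB^a$) is singular. I would handle the singular case by a limit/perturbation argument, showing that a limiting KKT point still satisfies a polynomial equation of the same degree (with possibly reduced set of admissible indices $I_1,I_2^i,I_3^i$). The final step is to observe that since the global minimizer is feasible and must satisfy KKT, it must coincide with one of the at most twenty-six enumerated candidates, so optimality of ${\cal P}$ can be certified by evaluating the objective on this finite list.
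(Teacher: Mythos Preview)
Your proposal is correct and follows essentially the same route as the paper: establish regularity (LICQ) of all feasible points, observe that $x_p>0$ so \eqref{eq:constraint4} is never active, and then enumerate the KKT points by active-set analysis over the four angular faces, obtaining the sixth-order secular equation in the interior, the fourth-order reduced equations on each of the four edges, and the four corner points. The only minor difference is that the paper handles the singular-matrix situation by an ``almost surely'' argument in a footnote rather than your limit/perturbation device, but the overall architecture and counting ($6+8+8+4=26$) are identical.
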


\begin{proof}
See Appendix \ref{AppendixProp1}
\end{proof}

In a nutshell, proposition \ref{Prop1} defines the optimal candidate solutions to
Problem ${\cal P}$. Precisely, each subset of solutions refers
to a specific portion of the feasible target locations. A complete description of the global optimum search procedure is reported in Algorithm \ref{final_algorithm}. 
It is worth observing that the determination of each subset of candidates requires the evaluation of the roots of a specific polynomial equation, whose efficient computation is discussed in the following subsection.

\begin{algorithm}[!h]
\caption{Localization Algorithm}
\label{final_algorithm}
\textbf{Input:} $\bar\theta, \bar\phi, \tau_i,  {\bp_r}_i,~i=0,\ldots,N, \epsilon$. \\
\textbf{Output:} Target position estimate $[\hat x_p, \hat y_p, \hat z_p]^T$;
\begin{enumerate}
    \item {{\bfseries{Parameter setup:}}}
   {\small{ \begin{itemize}
        \item Compute ~$\bH, \bB^{a}, \bB^{e}, \bH^{a}_i, \bH^{e}_i, ~i=1,2$, $\bar\bb_0$,  $\gamma_a,\gamma_e,\alpha$.
    \end{itemize}}}
          \item {{\bfseries{Candidate points evaluation}}} 
        \small  \begin{itemize}
             { 
              
              \item Find the roots (with the accuracy $\epsilon$) $\bar\lambda_h, h\in{I_1\subseteq\{1,2,\ldots,6\}}$ of equation \eqref{6_order_equation} satisfying \eqref{constraint_beck1}; then, compute $\bx^{*}(\bar{\lambda}_h)$;
                          
              \item Find the roots (with the accuracy $\epsilon$) $\beta^{i}_h, h\in{I_2^i\subseteq\{1,2,3,4\},~i=1,2}$ of equation \eqref{4_order_equation_1} satisfying \eqref{constraint_beck2}; then, compute $\bx^{*}(\beta^{i}_h)$;
              
              \item Find the roots (with the accuracy $\epsilon$) $\eta^{i}_h, h\in{I_3^i\subseteq\{1,2,3,4\},~i=1,2}$ of equation \eqref{4_order_equation_2} satisfying \eqref{constraint_beck3}; then, compute $\bx^{*}(\eta^{i}_h)$;
              \item Compute ${\bx^*_{4_{i,j}}}, (i,j)\in \{1,2\}^2$.}
          \end{itemize}
          
    \item {{\bfseries{Optimal Solution Selection}}}
         \begin{itemize}
          
             \item {\small Compute:}
             {\footnotesize{\begin{itemize}
                 \item $v_j=\|\bH\bx^{*}(\bar\lambda_{j})-\bc\|, ~j=1,\ldots,|I_1|$
                  
                  \item $v_{|I_1|+j}=\|\bH\bx^{*}({\beta^{1}_{j}})-\bc\|, ~j=1,\ldots,|{I^{1}_2}|$
 
                   \item $v_{|I_1|+|{I^{1}_2}|+j}=\|\bH\bx^{*}({\beta^{2}_{j}})-\bc\|, ~j=1,\ldots,|{I^{2}_2}|$

                  \item $v_{|I_1|+|{I^{1}_2}|+|{I^{2}_2}|+j}=\|\bH\bx^{*}({\eta^{1}_{j}})-\bc\|, ~j=1,\ldots,|{I^{1}_3}|$
 
                   \item $v_{|I_1|+|{I^{1}_2}|+|{I^{2}_2}|+|{I^{1}_3}|+j}=\|\bH\bx^{*}({\eta^{2}_{j}})-\bc\|, ~j=1,\ldots,|{I^{2}_3}|$
                  
                 \item $v_{|I_1|+|{I^{1}_2}|+|{I^{2}_2}|+|{I^{1}_3}|+|{I^{2}_3}|+j}=\|\bH\bx^*_{4_{j,1}}-\bc\|, ~j=1,2$
                
                 \item $v_{|I_1|+|{I^{1}_2}|+|{I^{2}_2}|+|{I^{1}_3}|+|{I^{2}_3}|+2+j}=\|\bH\bx^*_{4_{j,2}}-\bc\|, ~j=1,2$.              
   \end{itemize}}}

   \item {\small Determine} $j^*=\mbox{arg}~\displaystyle{\min_{j}}~v_j$ and pick up the corresponding solution, i.e.,
     \end{itemize} 
 {\footnotesize{   \begin{align*}
             \bar{\bx}^{*}=\begin{cases}
       \bx^{*}(\bar{\lambda}_{j^*})\mbox{ if}\mbox{ }1\leq j^{*} \leq |I_1|\\
       {\bx^{*}}({\beta^{1}_{j^*}}) ~  \mbox{if} ~ j^\star \geq |I_1|+1 ~\&\\
       ~~~~~~~~~~~~~~ j^\star \leq |I_1|+ |I_2^1|\\
      {\bx^{*}}({\beta^{2}_{j^*}}) ~  \mbox{if} ~ j^\star \geq |I_1|+ |I_2^1|+1 ~\&\\
      ~~~~~~~~~~~~~~ j^\star \leq |I_1|+ |I_2^1|+ |I_2^2|\\
       {\bx^{*}}({\eta^{1}_{j^*}}) ~ \mbox{if}  ~ j^\star \geq |I_1|+ |I_2^1|+ |I_2^2|
       +1~\&\\
       ~~~~~~~~~~~~~~ j^\star \leq |I_1|+ |I_2^1|+ |I_2^2|+ |I_3^1|\\
      {\bx^{*}}({\eta^{2}_{j^*}}) ~ \mbox{if} ~ j^\star \geq |I_1|+ |I_2^1|+ |I_2^2|+ |I_3^1|+1~\&\\
      ~~~~~~~~~~~~~~ j^\star \leq |I_1|+ |I_2^1|+ |I_2^2|+ |I_3^1|+ |I_3^2|\\
    {\bx^*_{4_{j^*,1}}} ~~~ \mbox{if} ~  j^\star \geq |I_1|+ |I_2^1|+ |I_2^2|+ |I_3^1|+ |I_3^2|+1~\&\\ 
     ~~~~~~~~~~~~~~ j^\star\leq |I_1|+ |I_2^1|+ |I_2^2|+ |I_3^1|+ |I_3^2|+2\\
      {\bx^*_{4_{j^*,2}}}~~~~~~~~~~~~~ \mbox{otherwise}
       \end{cases}
   \end{align*}}}

    {{\bfseries{Output:}}} $[\hat{x}_p, \hat{y}_p, \hat{z}_p]^{T}=[{\bar{x}^{*}}_1, {\bar{x}^{*}}_2,{\bar{x}^{*}}_3]^{T}$.
          
\end{enumerate} 
\end{algorithm}

\subsection{{Evaluation of the} Roots and Algorithm Computational Complexity}
Algorithm \ref{final_algorithm} involves the solution of  equations \eqref{6_order_equation}, \eqref{4_order_equation_1}, and \eqref{4_order_equation_2}. Guidelines and insights to the rooting process are now provided with reference to equation\footnote{Analogous considerations hold true for equations \eqref{4_order_equation_1} and \eqref{4_order_equation_2}.} \eqref{6_order_equation}. As shown in Appendix \ref{appendix:sol_det} (the interested reader may refer to it for technical details and parameters definitions) solving \eqref{6_order_equation} is tantamount to finding the real-valued roots\footnote{In Appendix \ref{appendix:sol_det}, a  normalized version of \eqref{6_order_equation_recast_1} is analyzed.} of
\begin{equation}\label{6_order_equation_recast_1}
\sum_{j=1}^3 \frac{|z_j|^2}{(\bar\lambda + \lambda_j)^2}-\bar b_0^2.
\end{equation}
Evidently, each root of \eqref{6_order_equation_recast_1} must belong to one of the four subsets
$\mathcal{J}_1 = (-\infty,-\lambda_3),$
$\mathcal{J}_2 = (-\lambda_3, -\lambda_2),$
$\mathcal{J}_3 = (-\lambda_2, -\lambda_1),$ and
$\mathcal{J}_4 = (-\lambda_1,+\infty)$.

Now, being \eqref{6_order_equation_recast_1} strictly increasing (decreasing) over $\mathcal{J}_1 $ ($\mathcal{J}_4$) with a range $(-\bar b_0^2,+\infty)$, a unique root exists within $\mathcal{J}_1$ ($\mathcal{J}_4$) and it can be found through the standard bisection method.

In $\mathcal{J}_2$ ($\mathcal{J}_3$), instead, zero, one, or even two roots can exist, depending on the range of \eqref{6_order_equation_recast_1} over $\mathcal{J}_2$ ($\mathcal{J}_3$). Leveraging the strict convexity of \eqref{6_order_equation_recast_1}, these points can be determined according to a two-stage process involving at most three bisection loops each of them applied to either \eqref{6_order_equation_recast_1} or its derivative.

As discussed in Appendix \ref{appendix:sol_det}, the parameters of \eqref{6_order_equation_recast_1} can be computed via elementary functions applied to the entries of $(\bH^T\bH)$, whose evaluation involves $\mathcal{O}(N^2)$ operations. Now, denoting by ${\epsilon}_0$ the maximum size (among the different bisections) of the initial search interval (see Appendix \ref{equation_resolution} for their determination) and by $\epsilon$ the desired accuracy level of any root, the number $n$ of bisection iterations, in each search process, is upper bounded by
\begin{equation}
n = \lceil\log_2\left(\frac{\epsilon_0}{\epsilon}\right) \rceil.
\end{equation}
Finally, each bisection cycle is performed with a computational complexity of $\mathcal{O}(1)$, being involved just elementary functions and comparisons. Hence, for a given accuracy $\epsilon$, the roots search process entails $\mathcal{O}(1)$ operations, given $(\bH^T \bH)$. It is worth pointing out that similar conclusions apply to
the solution of \eqref{4_order_equation_1} and \eqref{4_order_equation_2}.

Let us now deal with the computational complexity of {\bf Algorithm 1}. Given the solutions to \eqref{6_order_equation}, \eqref{4_order_equation_1}, and \eqref{4_order_equation_2}, it mainly entails a) the evaluation of the resulting candidate optimal solutions and b) the computation of the corresponding objective values. The former can be accomplished with a computational burden of $\mathcal{O}(1)$, being embroiled elementary functions comparisons and the matrices $(\bH^T\bH)$, $({\bH_i^a}^T{\bH_i^a})$, and $({\bH_i^e}^T{\bH_i^e})$ are already computed in the bisection processes. The latter requires $\mathcal{O}(N)$ operations to evaluate the squared norms. As a result, the overall computational complexity of Algorithm \ref{final_algorithm} is $\mathcal{O}(N^2)$.

\section{Performance Analysis}\label{analysis}
\begin{figure*}[!htb]
    \centering
    \includegraphics[width=1\textwidth]{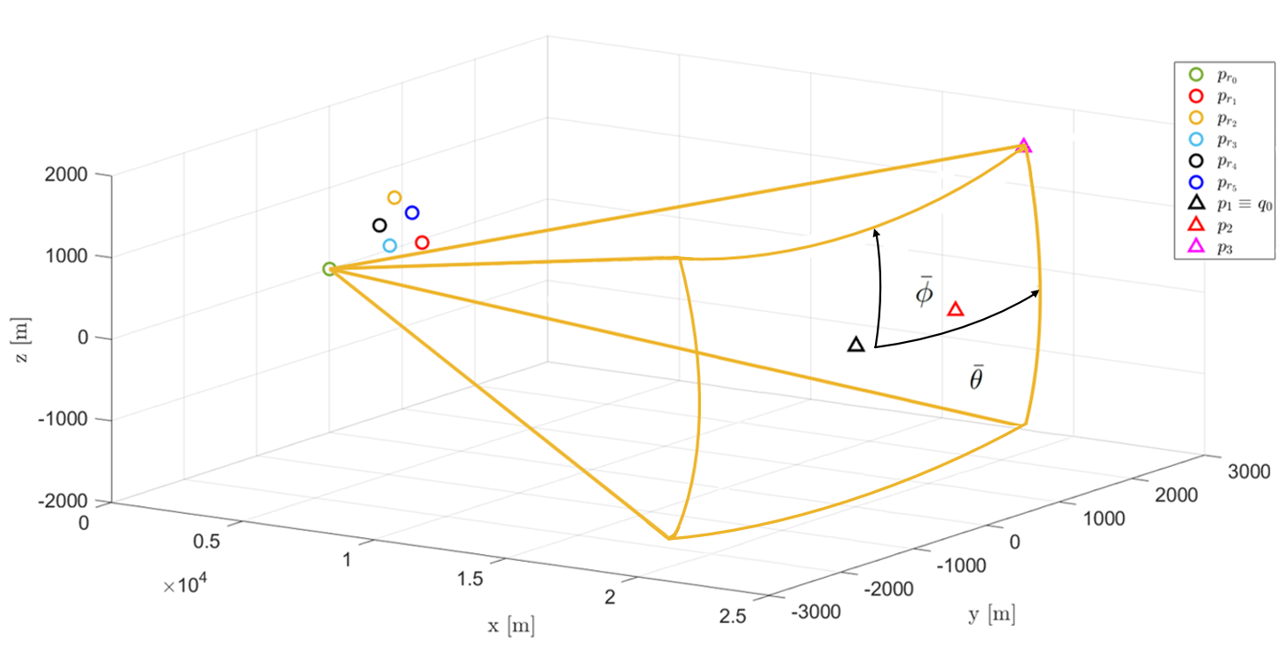}
        \caption{Geometric configuration of the radiolocation system and target location scenarios. In the figure legend, $\bp_1$, $\bp_2$, and $\bp_3$ are the considered target positions, located at range $r = 20$ km and with azimuth-elevation $(0^\circ, 0^\circ), (4^\circ, 0^\circ)$, and $(6.9^\circ, 4.9^\circ)$, respectively.}
    \label{fig:setup_sensors}
\end{figure*}
Radiolocation systems composed of
$N = 4$ and $N=5$ receive-only sensors and one active radar are considered. Focusing first on the $N=4$ case, the  receiving nodes are located at ${\bp}_{r_1} = [916,  941,  95]^T$ km, ${\bp}_{r_2} = [973,541,764]^T$ km, ${\bp}_{r_3} = [955,483, 191]^T$ km, and ${\bp}_{r_4} = [936,350,477]^T$ km.
\begin{figure}[!htb] \centering
\subfigure[$\theta_p = 0^\circ, \phi_p = 0^\circ $.]
{\includegraphics[width=0.9\columnwidth]{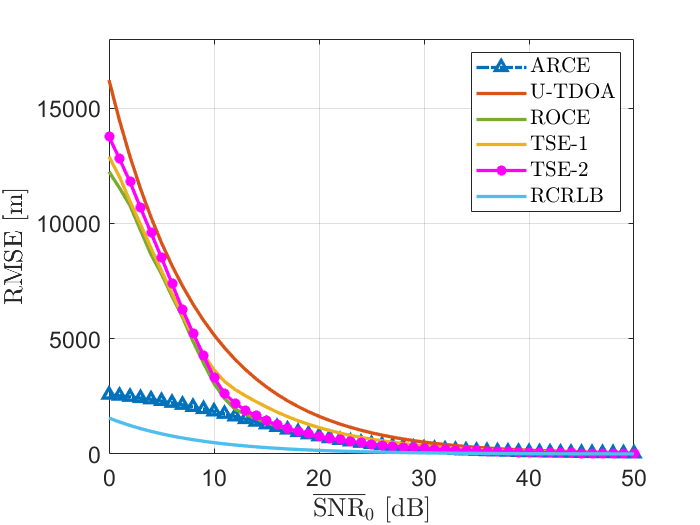}}\par\medskip
\subfigure[$\theta_p = 4^\circ, \phi_p = 0^\circ$.]
{\includegraphics[width=0.9\columnwidth]{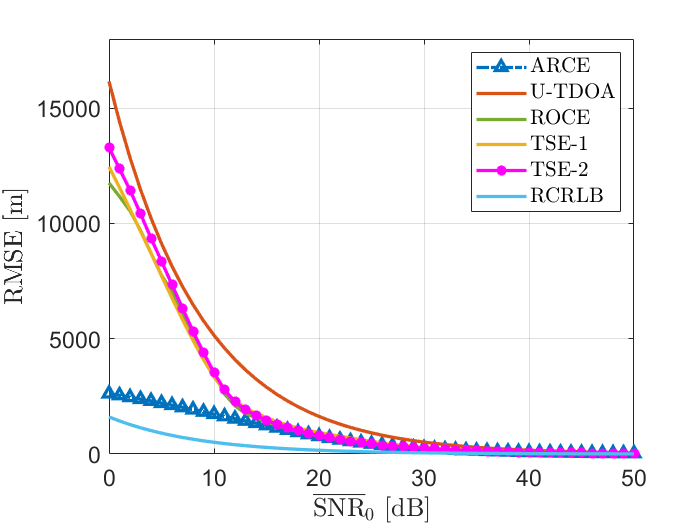}}\par\medskip
\subfigure[$\theta_p = 6.9^\circ, \phi_p = 4.9^\circ$]
{\includegraphics[width=0.9\columnwidth]{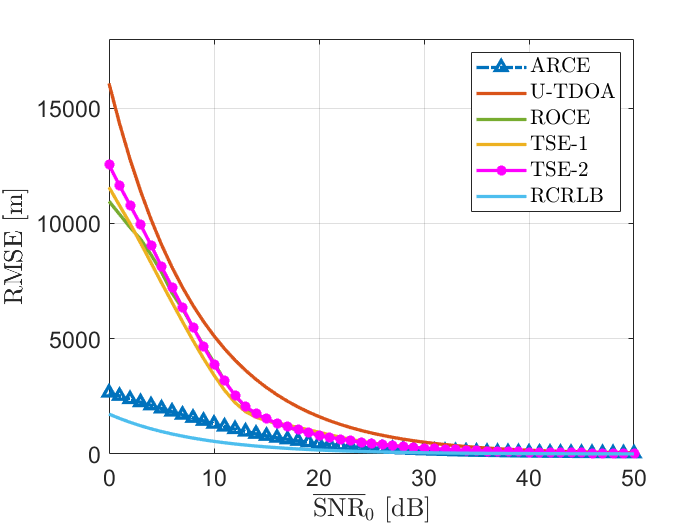}}
\caption{RMSE versus $\overline{\mbox{SNR}}_0$, when $\bar\theta=7^\circ$, $\bar\phi=5^\circ$, and the radiolocation system comprises $N = 4$ receive-only  sensors.}
\label{fig:RMSE_vs_SNR_4_sensors}
\end{figure}
Moreover, the measurement errors are modeled as zero-mean independent Gaussian random variables with standard deviations given in \eqref{variance_bistatic_measurements}. Therein, the SNR of the $N = 4$ bistatic pairs (transmitter-receiver) and
of the active radar are calculated as 

\begin{equation}
\mbox{SNR}_i=\frac{{\overline{\mbox{SNR}}}_0}{L_i}\frac{\|\bq_0\|^2}{\|\bp\|^2}\frac{\|\bq_0\|^2 }{\|\bp-{\bp_r}_i\|^2}, i=0,1,\ldots,4       
\end{equation}
where $\overline{\mbox{SNR}}_0$ is a reference SNR computed via the monostatic radar range equation \cite{Richards} at the nominal point $\bq_0 = [20, 0, 0]^T$ km and $L_i, i=0,\ldots,4$, accounts for a loss factor due to different receive gains of the active sensor and the receive-only units. In particular, $L_0 = 0$ dB, while $L_i=6$ dB, $ i=1,\ldots,4$.

\begin{figure}[!htb] \centering
\subfigure[$\theta_p = 0^\circ, \phi_p = 0^\circ $.]
{\includegraphics[width=0.9\columnwidth]{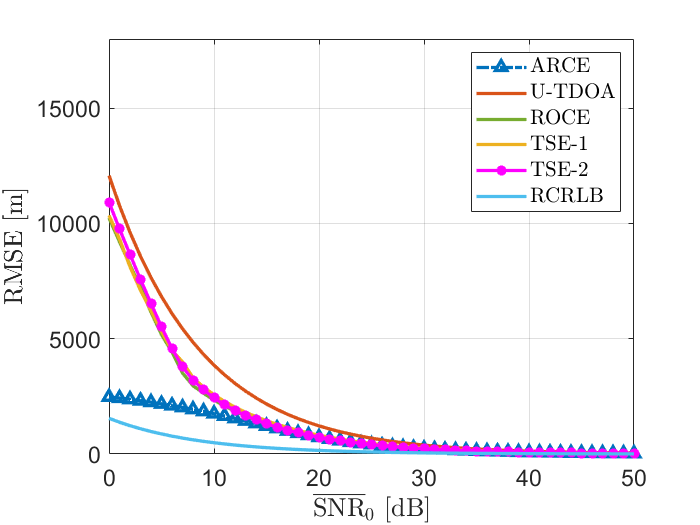}}\par\medskip
\subfigure[$\theta_p = 4^\circ, \phi_p = 0^\circ$.]
{\includegraphics[width=0.9\columnwidth]{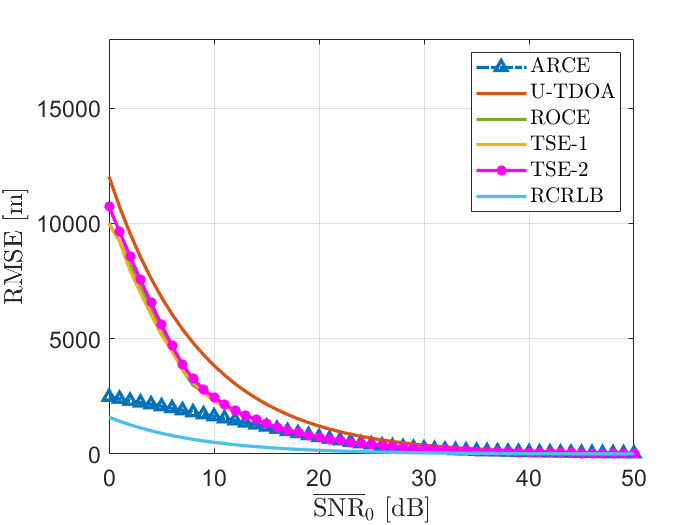}}\par\medskip
\subfigure[$\theta_p = 6.9^\circ, \phi_p = 4.9^\circ$]
{\includegraphics[width=0.9\columnwidth]{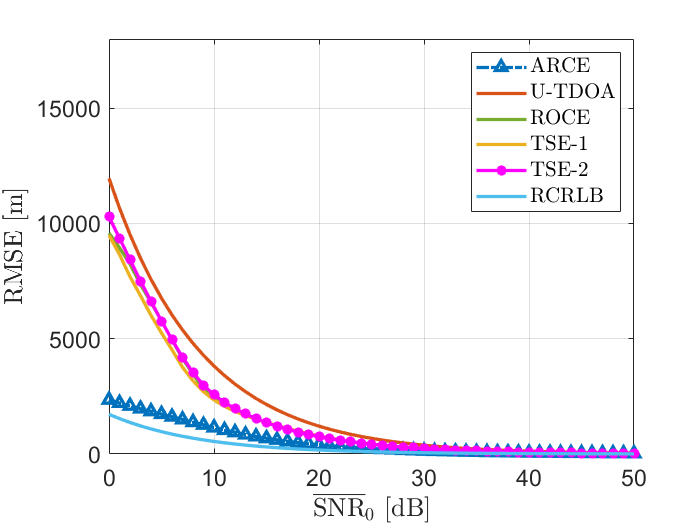}}
\caption{RMSE versus $\overline{\mbox{SNR}}_0$, when $\bar\theta=7^\circ$, $\bar\phi=5^\circ$, and the  radiolocation system comprises $N = 5$ receive-only  sensors.}
\label{fig:RMSE_vs_SNR_5_sensors}
\end{figure}

\begin{figure}[!htb] \centering
{\includegraphics[width=0.9\columnwidth]{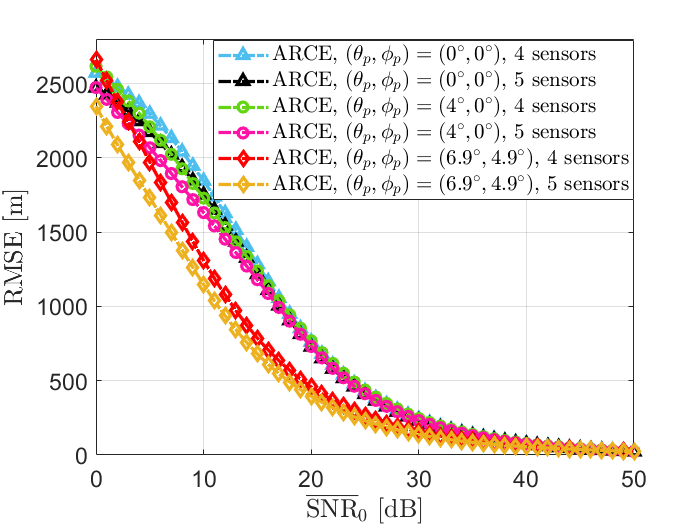}}
\caption{RMSE versus ${\overline{\mbox{SNR}}}_0$, when $\bar\theta=7^\circ$, $\bar\phi=5^\circ$, and radiolocation system comprises $N = 4$ and $N = 5$ receive-only sensors.}
\label{fig:comparison}
\end{figure}

\begin{figure}[!htb] \centering
{\includegraphics[width=0.9\columnwidth]{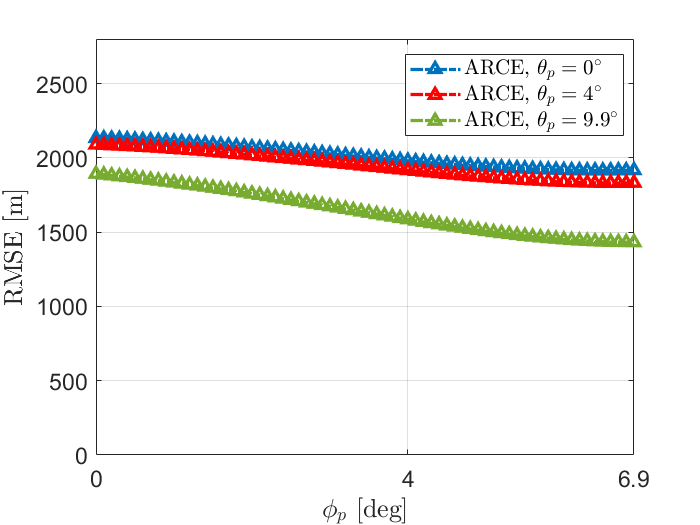}}
\caption{RMSE versus $\phi_p$, for $\overline{\mbox{SNR}}_0 = 10$ dB, when  $\bar\theta=10^\circ$, $\bar\phi=7^\circ$, and radiolocation system comprises $N = 4$ receive-only sensors.}
\label{fig:RMSE_vs_angle}
\end{figure}

\begin{figure}[!htb] \centering
{\includegraphics[width=0.9\columnwidth]{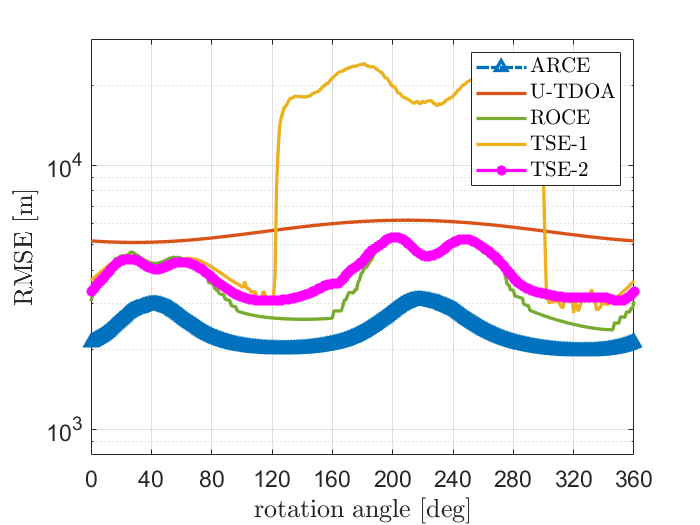}}
\caption{RMSE versus radar pointing direction, for $\overline{\mbox{SNR}}_0 = 10$ dB, when $\bar\theta=10^\circ$, $\bar\phi=7^\circ$, and radiolocation system comprises $N = 4$ receive-only sensors.}
\label{fig:rotating_radar}
\end{figure}

The performance of the developed localization algorithm is evaluated considering as a figure of merit the RMSE of the target position estimate,
formally defined as
$\sqrt{E[\|\hat\bp -  \bp\|^2]}$, where $\hat\bp$ is the estimated position. Since the RMSE does not present a closed-form expression, Monte Carlo simulation with $1000$ independent trials is exploited. 
Additionally,  the Root CRLB (RCRLB), defined as $\sqrt{\mbox{tr} (\mbox{FIM}^{-1})}$,
where FIM denotes the Fisher Information Matrix associated with the unknown parameters \cite{vantrees}, is provided as performance benchmark. For comparison purposes, also the performance of some counterparts are illustrated. Specifically, the performance of the procedures developed\footnote{Their implementation for the 3D case with a transmitter co-located with one of the receivers at a known location is considered.}  in \cite{TSE} and \cite{Multistatic_joint_localization}, denoted hereafter as TSE-$1$ and TSE-$2$, respectively, is reported too, along with two alternative methods for target positioning based on the measurement model in \eqref{bistatic_measurement}, discussed in the following.
\raggedbottom
\begin{itemize}
\item {U-TDOA-Like Estimator}\label{totally_unconstrained_estimator}
Unconstrained U-TDOA resorts to a standard  LS framework and, assuming $\bH$ full-rank, gives the following position estimate
\begin{equation}
    \hat{\bp}_{TDOA} = \arg \min_{\bp\in\mathbb{R}^3}{\|\bH\bp-c\|^2}=\left(\bH^T\bH\right)^{-1}\bH^T\bg.
\end{equation}

\item
{Range-Only Constrained Estimator (ROCE)}\label{partially_constrained_estimator}
Giving up the main-lobe constraint, the estimation problem can be framed as
\begin{subequations}
\begin{align}[left = {\cal P} \empheqlbrace]
{\displaystyle \min_{{\bp}}} & {\left\| \bH {\bp} - \bg \right\|}^2\\
\mbox{s.t.}  &\|\bp\|^2= \bar b_0^2 
\end{align}
\end{subequations}
and the resulting position can be retrieved as
\begin{equation}
 \hat{\bx}=\arg\min_{i\in I_1} {\left\| \bH {\bx_i} - \bg \right\|}^2
\end{equation}
where 
\begin{equation}
\bx_i = \left({\bH}^T{\bH} + \bar\zeta_i \bI\right)^{-1}{\bH}^T\bg, ~i \in I_1 \subseteq \{1,\ldots,6\},  
\end{equation}  with $\bar\zeta_i$ the real-valued solutions to the sixth-order equation 
\begin{equation}
{\bx}^\star(\bar\zeta)^T {\bx}^\star(\bar\zeta) = \bar b_0^2.
\end{equation}

\end{itemize}

 In the considered numerical analysis, the target is positioned
at $[r \cos{\theta_p} \cos{\phi_p}, r \sin{\theta_p} \cos{\theta_p}, r\sin{\phi_p}]^T$, with
$r = 20$ km and different values of $\theta_p$ and $\phi_p$ are considered, i.e., $(\theta_p, \phi_p) \in\{(0^\circ, 0^\circ), (4^\circ, 0^\circ), (6.9^\circ, 4.9^\circ)\}$. Furthermore, the main-beam width in azimuth
 and elevation for the monostatic radar are $\bar\theta = 7^\circ$ and $\bar\phi = 5^\circ$, respectively.
 {Finally, the transmit signal bandwidth equal to $B = 2$ MHz.} The considered target positions along with the radiolocation system and the radar main-lobe radiation pattern are displayed in  Fig. \ref{fig:setup_sensors}.

In Figs. \ref{fig:RMSE_vs_SNR_4_sensors}, the RMSE versus ${\overline{\mbox{SNR}}}_0$ is illustrated, where each subfigure refers to a specific scenario for the target position. The results presented in Fig.\ref{fig:RMSE_vs_SNR_4_sensors} show that the
designed estimator achieves a satisfactory performance being the corresponding RMSE curves closer and closer to the CRLB, as the SNR increases. 
Moreover, the devised algorithm outperforms the counterparts, e.g., U-TDOA estimator, ROCE estimator, TSE$-1$ and TSE$-2$, with interesting gains in the range $[0,20]$ dB for $\overline{\mbox{SNR}}_0$. This behaviour corroborates the advantage of using a-priori information related to the characteristics of the monostatic antenna beamwidth to improve the localization performance.
Specifically, comparing Figs. \ref{fig:RMSE_vs_SNR_4_sensors}(a)-(b)-(c) it is evident that the U-TDOA estimator and TSE$-2$ achieve a performance comparable with the ARCE for ever greater  values of $\overline{\mbox{SNR}}_0$ as the  target distance from  the antenna’s pointing direction increases. This behaviour is not surprising since the beampattern extent constraint is more valuable when the target is closer and closer to the main-lobe boundary. 
These insights are confirmed by the plots  in Fig. \ref{fig:RMSE_vs_SNR_4_sensors}(c) pinpointing that the proposed estimator achieves its best performance when $(\theta_p, \phi_p)=(6.9^\circ, 4.9^\circ)$. 
\raggedbottom

In Fig. \ref{fig:RMSE_vs_SNR_5_sensors} the RMSE is also analyzed assuming the scenario of Fig. \ref{fig:RMSE_vs_SNR_4_sensors} and an additional receive unit located at ${\bp_r}_5 = [760,860,477]^T$ km. Inspection of the figures corroborates the merits of the proposed algorithm with respect to the counterparts. As expected, all the considered procedures provide better estimates than the case of Fig. \ref{fig:RMSE_vs_SNR_4_sensors} with $N = 4$. To gather further insights about the impact of the number of receivers on the proposed technique, in Fig. \ref{fig:comparison} the RMSE of the devised localization method is displayed versus the $\overline{\mbox{SNR}}_0$, for $N = 4$ and $N = 5$. 
As expected, the results in Fig. \ref{fig:comparison} highlight that the presence of the additional receiver can grant a performance improvement ranging between 100 and 400 meters, for ${\overline{\mbox{SNR}}}_0$ values from $0$ to $20$ dB.

{ The case of a wider  main-lobe width is considered in Fig. \ref{fig:RMSE_vs_angle}, i.e., $\bar\theta = 10^\circ$ and $\bar\phi = 7^\circ$. 
The performance in terms of RMSE for the ARCE estimator is plotted versus the elevation angle $\phi_p$, for three values of the target azimuth.
The curves show that the proposed algorithm provides more accurate estimates 
 when $\theta_p=9.9^\circ$, as compared with $\theta_p=0^\circ$ and $\theta_p=4^\circ$. It is worth mentioning
that the performance improves as $\phi_p$ increases regardless of $\theta_p$. This behavior confirms that the estimation error decreases when the target is closer to the boundary of the angular region.

In Fig. \ref{fig:rotating_radar} the localization capabilities are analyzed when the radar antenna pointing direction rotates in the $x-y$ plane, assuming $\overline{\mbox{SNR}}_0 = 10$ dB. 
The target is supposed  fixed and centered in the radar beam, regardless of the pointing direction. This case study allows to evaluate one of the main skills of interest that such a multistatic radar network presents, i.e., the geometric diversity provided by the system, which depends on the spatial configuration of the receiver nodes with respect to the active radar main beam direction.  The curves in Fig. \ref{fig:rotating_radar} pinpoint that ARCE strategy is sensibly more accurate and robust than the alternative methods, especially in comparison with TSE-$1$.} Indeed, the improvement is observed for all the rotation angles.

\section{Conclusions}\label{conclusions}
A novel strategy for 3D target positioning has been developed for C-RFMs composed of a master transmit-receive node and multiple receive sensors. The monostatic radiation pattern features have been wisely exploited in the proposed positioning process restricting the angular location of any illuminated target. Hence, leveraging monostatic and bistatic range measurements, the Cartesian coordinates of the target have been estimated as the global optimal solution to a constrained non-convex LS problem. Resorting to the KKT optimality conditions, an efficient solution method has been devised to estimate the target position in quasi-closed-form. In particular, by means of an ad-hoc partition of the feasible set, a finite number of candidate optimal solutions has been identified, whose evaluation just rely on the computation of elementary functions and of the roots of specific polynomial equations. For this last task a smart rooting method has been designed capitalizing the structure of the involved equations and the bisection method. Remarkably, the overall target localization process demands a computational complexity proportional to the squared number of receive units. 

The performance of the proposed algorithm has been assessed in terms of RMSE also in comparison with some competitors available in the open literature. For the considered case studies, the new method achieves interesting accuracy gains over the counterparts, especially for weak target returns. Besides, it exhibits performance levels close to the CRLB benchmark further corroborating its effectiveness.

Furthermore, as possible future research avenues, it is definitely of interest to perform an experimental validation of the proposed algorithm on measured data, as well as extend the developed framework to a C-RFM comprising multiple transmitters.

\section{Appendix}\label{appendix}

This Appendix comprises two parts. Part A discusses the regularity of the feasible points for Problem $\cal P$ and then provides the proof of Proposition \ref{Prop1}. Part B deals with the design of computationally efficient techniques to identify candidate optimal solutions.

\subsection{Proof of Proposition \ref{Prop1}}\label{AppendixA}
\begin{lemma}\label{Lemma1}
Any feasible point ${\bar{\bx}}$ to \eqref{eq_prob_comp2} is regular for the optimization Problem ${\cal P}$.
\end{lemma}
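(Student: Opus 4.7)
The plan is to verify that the Linear Independence Constraint Qualification (LICQ) holds at every feasible point, i.e., that the gradients of the equality constraint together with those of the active inequality constraints are linearly independent. This is what is needed to justify applying the KKT necessary conditions in the proof of Proposition~\ref{Prop1}.

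First I would list the constraint gradients: the equality $\|\bp\|^2=\bar b_0^2$ has gradient $2\bp$, the two azimuth inequalities have gradients $[-\gamma_a,\pm 1,0]^T$, the two elevation inequalities have gradients $[-\gamma_e,0,\pm 1]^T$, and the half--space $x_p\ge 0$ has gradient $[-1,0,0]^T$. Next I would observe two preliminary facts that restrict which constraints can be simultaneously active. (i) At any feasible point $x_p>0$: indeed if $x_p=0$, then \eqref{eq:constraint2}--\eqref{eq:constraint3} force $y_p=z_p=0$, contradicting $\|\bp\|^2=\bar b_0^2>0$; therefore the constraint \eqref{eq:constraint4} is never active. (ii) Two opposite azimuth (resp.\ elevation) inequalities cannot be active at the same point: their simultaneous activity would yield $x_p\gamma_a=-x_p\gamma_a$ (resp.\ $x_p\gamma_e=-x_p\gamma_e$), which with $x_p>0$ and $\gamma_a,\gamma_e>0$ is impossible.

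Having reduced the possibilities, I would then run through the remaining cases, which involve the equality constraint plus at most one azimuth and at most one elevation inequality. The cases with zero or one active inequality are immediate: a single nonzero gradient is trivially independent, and in the one--inequality case a routine check shows that $\bp=[x_p,\pm x_p\gamma_a,z_p]^T$ (or its elevation analogue) cannot be a scalar multiple of $[-\gamma_a,\pm 1,0]^T$ unless $x_p=0$. The only genuinely interesting case is when one azimuth and one elevation inequality are simultaneously active, say (after a sign choice) $y_p=\varepsilon_a x_p\gamma_a$ and $z_p=\varepsilon_e x_p\gamma_e$ with $\varepsilon_a,\varepsilon_e\in\{-1,+1\}$. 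The gradient matrix is
\begin{equation*}
M=\begin{bmatrix} x_p & \varepsilon_a x_p\gamma_a & \varepsilon_e x_p\gamma_e \\ -\gamma_a & \varepsilon_a & 0 \\ -\gamma_e & 0 & \varepsilon_e \end{bmatrix},
\end{equation*}
and a direct expansion gives $\det M = \varepsilon_a\varepsilon_e\,x_p(1+\gamma_a^2+\gamma_e^2)\neq 0$ thanks to $x_p>0$. Hence the three gradients are linearly independent, and LICQ holds at $\bar\bx$.

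The argument is essentially a bookkeeping exercise, so I do not anticipate any real obstacle; the only step that requires some care is the enumeration of which inequality constraints can be jointly active, and the determinant computation in the two--inequality case, both of which are short. This lemma is what legitimizes the subsequent KKT--based derivation of the optimal candidate set in Proposition~\ref{Prop1}.
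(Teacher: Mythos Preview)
Your proposal is correct and follows essentially the same route as the paper: both arguments first observe that $x_p>0$ at any feasible point (so \eqref{eq:constraint4} is never active) and then verify LICQ by enumerating which combinations of the azimuth and elevation inequalities can be active, checking linear independence of the corresponding gradients in each case. Your presentation is in fact a bit tighter than the paper's---you parametrize the sign choices by $\varepsilon_a,\varepsilon_e$ and compute a single determinant, whereas the paper lists the subcases individually and simply asserts the independence---but the underlying reasoning is identical.
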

\begin{proof}
Let $\bar{\bx}=[\bar{x}_1,\bar{x}_2,\bar{x}_3]^T$ be a feasible point to $\cal P$. Note that $\bar{x}_1 > 0$, i.e., the inequality constraint \eqref{eq:constraint4} is inactive.
In fact, due to the constraint \eqref{eq:constraint1}, $\bar{\bx}=[0,0,0]^T$ cannot be a feasible point.
To study the regularity of $\bar{\bx}$, the following situations should be distinguished:
\begin{enumerate}
\item constraints \eqref{eq:constraint2} and \eqref{eq:constraint3} are simultaneously inactive. The gradient of $\bar\bx^T\bar\bx$ is
$$
\left.\nabla {\bp}^T {\bp}\right|_{{\bp}=\bar{\bx}} = 2 [\bar{x}_1,\bar{x}_2,\bar{x}_3]^T \neq \bzero
$$
implying the regularity of $\bar{\bx}$.

\item $\bar{x}_2 = \gamma_a\bar{x}_1$ (or $\bar{x}_2 = -\gamma_a\bar{x}_1$) and \eqref{eq:constraint3} are inactive. The gradients
$$
\left.\nabla {\bp}^T {\bp}\right|_{{\bp}=\bar{\bx}}= 2 [\bar{x}_1,\gamma_a \bar{x}_1,\bar{x}_3]^T
$$
$$
(\text{or } \left.\nabla {\bp}^T {\bp}\right|_{{\bp}=\bar{\bx}} = 2 [\bar{x}_1,-\gamma_a \bar{x}_1,\bar{x}_3]^T)
$$
and 
$$
\nabla \left(p_2 - \gamma_a p_1\right) = [-\gamma_a,1,0]^T
$$
$$
(\text{or } \nabla \left(-p_2 -\gamma_a p_1\right) = [-\gamma_a,-1,0]^T)
$$
are linearly independent and hence $\bar{\bx}$ is regular.

\item $\bar{x}_3 = \gamma_e \bar{x}_1$ (or $\bar{x}_3 = -\gamma_e \bar{x}_1$) and \eqref{eq:constraint2} are inactive. The gradients
$$
\left.\nabla {\bp}^T {\bp}\right|_{{\bp}=\bar{\bx}}= 2 [\bar{x}_1,\bar{x}_2,\gamma_e \bar{x}_1]^T
$$
$$
(\text{or } \left.\nabla {\bp}^T {\bp}\right|_{{\bp}=\bar{\bx}} = 2 [\bar{x}_1, \bar{x}_2,-\gamma_e\bar{x}_1]^T)
$$
and 
$$
\nabla \left(p_3 - \gamma_e p_1\right) = [-\gamma_e,0,1]^T
$$
$$
(\text{or } \nabla \left(-p_3 -\gamma_e p_1\right) = [-\gamma_e,0,-1]^T)
$$
are linearly independent and hence $\bar{\bx}$ is regular.

\item $\bar{x}_3= \gamma_e \bar{x}_1  $ and $\bar{x}_2=\gamma_a \bar{x}_1$ (or $\bar{x}_2=-\gamma_a \bar{x}_1$). The gradients
$$
\left.\nabla {\bp}^T {\bp}\right|_{{\bp}=\bar{\bx}}= 2 [\bar{x}_1,\gamma_a \bar{x}_1,\gamma_e\bar{x}_1  ]^T
$$
$$
(\text{or }\left.\nabla {\bp}^T {\bp}\right|_{{\bp}=\bar{\bx}} = 2 [\bar{x}_1,-\gamma_a \bar{x}_1,\gamma_e \bar{x}_1  ]^T),
$$

$$
\nabla \left(p_3 - \gamma_e p_1\right) = [-\gamma_e,0,1]^T
$$

and
$$
\nabla \left(p_2 -\gamma_a p_1\right) = [-\gamma_a,1,0]^T
$$
$$
(\text{or } \nabla \left(-p_2 -\gamma_a p_1\right) = [-\gamma_a,-1,0]^T)
$$
are linearly independent implying the regularity of $\bar{\bx}$.
\item $\bar{x}_3= -\gamma_e \bar{x}_1  $ and $\bar{x}_2=\gamma_a \bar{x}_1$ (or $\bar{x}_2=-\gamma_a \bar{x}_1$). The gradients
$$
\left.\nabla {\bp}^T {\bp}\right|_{{\bp}=\bar{\bx}}= 2 [\bar{x}_1,\gamma_a \bar{x}_1,-\gamma_e\bar{x}_1  ]^T
$$
$$
(\text{or }\left.\nabla {\bp}^T {\bp}\right|_{{\bp}=\bar{\bx}} = 2 [\bar{x}_1,-\gamma_a \bar{x}_1,-\gamma_e \bar{x}_1  ]^T),
$$

$$
\nabla \left(-p_3 - \gamma_e p_1\right) = [-\gamma_e,0,-1]^T
$$

and
$$
\nabla \left(p_2 -\gamma_a p_1\right) = [-\gamma_a,1,0]^T
$$
$$
(\text{or } \nabla \left(-p_2 -\gamma_a p_1\right) = [-\gamma_a,-1,0]^T)
$$
are linearly independent implying the regularity of $\bar{\bx}$.

\end{enumerate}

Following the same line of reasoning, it can be also shown that the feasible points of the restricted versions of $\cal P$, obtained considering the different regions of the feasible set, fulfill the regularity condition. 

\end{proof}

\subsubsection{Proof of Proposition \ref{Prop1}}\label{AppendixProp1}
Let us first observe that Weierstrass theorem ensures the existence of a global minimizer to $\cal P$, being the objective function continuous and the constraint set compact.
The basic idea behind the proof is to establish candidate optimal solutions among the feasible points of the problem, which are all regular according to Lemma \ref{Lemma1}. To this end, different regions of the feasible set are explored. 

\begin{enumerate}

\item[a)] Assuming all the inequality constraints inactive, candidate optimal solutions to $\cal P$ can be found among the regular points of

\begin{equation}
{\cal P}_1 \left\{
\begin{array}{ll}
\displaystyle{\min_{\bp}} & { \left|\left| \bH\bp -\bg \right|\right|^2} \\
\text{s.t.} & 
\|\bp\|^2 = \bar{b}_0^2
\end{array}
\right.,
\end{equation}

which satisfy the necessary first-order optimality conditions \cite{Bertsekas16}, as well as the inequality constraints

\begin{equation} \label{constraint_beck4}
\begin{cases}
- \gamma_a {x}_p^\star < {y}_p^\star < \gamma_a  {x}_p^\star\\
- \gamma_e {x}_p^\star < {z}_p^\star <\gamma_e  {x}_p^\star\\
{x}_p^\star > 0
\end{cases}.
\end{equation}

These solutions\footnote{
The solutions in \eqref{sol1} implicitly assume that $(\bH^T\bH + \bar\lambda_h\bI)$ is full-rank. However, almost surely the necessary condition $(\bH^T\bH + \bar\lambda_h\bI)\bp = \bH^T\bg $ when $(\bH^T\bH + \bar\lambda_h\bI)$ is rank deficient does not admit solution, provided that $\bH$ is full-column rank.
} are among the points 
\begin{equation}\label{sol1}
  {\bx}^\star(\bar\lambda_h) = \left({\bH}^T{\bH} + \bar\lambda_h \bI\right)^{-1}{\bH}^T\bg 
\end{equation}
with $\bar\lambda_h,~h \in \bar I_1 \subseteq \{1,\ldots,6\}$, the real-valued roots of the sixth-order equation 
\begin{equation} \label{one_root}
   {\bx}^\star(\bar\lambda)^T {\bx}^\star(\bar\lambda) =\bar{b}_0^2.
\end{equation}

As a consequence, there are at most six candidate optimal points to $\cal P$ for case a).

\item[b)] If $y_p = (-1)^{i+1} \gamma_a x_p, i =1,2$, then  ${\cal P}$  is equivalent to

\begin{equation}
{\cal P}^i_2 \left\{
\begin{array}{ll}
\displaystyle{\min_{\bq}} & { \left|\left| {\bH_i^a}\bq -\bg \right|\right|^2} \\
\text{s.t.} & 
\bq^T \bB^a\bq= \bar{b}_0^2\\
& -q_1\gamma_e \leq q_2 \leq q_1\gamma_e \\
& q_1 \geq 0
\end{array}
\right.,
\end{equation}

where $ \bq = \left[ x_p, z_p \right]^T$,
\begin{align}
{\bH_i^a} = \bH \left[\begin{matrix} 1 & 0 \\
(-1)^{i+1}\gamma_a & 0 \\
0 & 1
\end{matrix}\right], i=1,2, 
\end{align}
and
\begin{equation}
 \bB^a  = \left[\begin{matrix} 1 +\gamma_a^2 & 0 \\0 & 1 \end{matrix}\right]. 
\end{equation}

Assuming $-q_1\gamma_e < q_2 < q_1\gamma_e$ and $q_1 > 0$, candidate optimal solutions to ${{\cal P}_2}$ can be found among the feasible points of

\begin{equation}
{\cal P}^i_3 \left\{
\begin{array}{ll}
\displaystyle{\min_{\bq}} & { \left|\left| {\bH_i^a}\bq -\bg \right|\right|^2} \\
\text{s.t.} & 
\bq^T \bB^a\bq= \bar{b}_0^2
\end{array}
\right.,
\end{equation}

which comply with the necessary optimality conditions and satisfy  $-q_1\gamma_e < q_2 < q_1\gamma_e$ and $q_1 > 0$. These solutions can be obtained from the points\footnote{A situation similar to footnote 5 occurs, i.e., almost surely candidate optimal solutions demand ${\bH_i^a}^T{\bH_i^a} + \beta^i_h \bB^a$ to be full-rank.}

\begin{equation}
    \bq^*(\beta^i_h) = \left( {\bH_i^a}^T{\bH_i^a} + \beta^i_h \bB^a\right)^{-1}{\bH_i^a}^T\bg
\end{equation}

with $\beta^i_h, h\in \bar I_2^i\subseteq \{1,...,4\}$, the real-valued roots to the fourth-order equation 

\begin{equation}
  {\bq^*}^T(\beta^i_h) \bB^a \bq^*(\beta^i_h) = \bar{b}_0^2. 
\end{equation}

As a consequence, there are at most eight candidate optimal points to $\cal P$ for case b) with inequalities strictly satisfied, obtained as $[q_1^*(\beta^i_h), (-1)^{i+1} \gamma_a q_1^*(\beta^i_h), q_2^*(\beta^i_h)]^T, i =1,2, ~h\in I_2^i\subseteq \bar I_2^i$.

\item[c)] If $z_p=(-1)^{i+1}\gamma_e x_p, i=1,2$, and $y_p\neq (-1)^{j+1}\gamma_a x_p, j=1,2$, the same technique as in case b) is used.

    \item[d)] If $y_p=(-1)^{i+1}\gamma_a x_p$ and $z_p =(-1)^{j+1}\gamma_e x_p, (i, j)\in \{1, 2\}^2$,  the candidate solutions are the four points

$~~~~~ {\bx^*_4}_{i,j}=\frac{\bar b_0}{\sqrt{1+ \gamma_a^2 + \gamma_e^2}}\left[ 1, (-1)^{1+i} \gamma_a ,  (-1)^{1+j} \gamma_e  \right]^T$, 

\begin{align}
&~(i, j)\in \{1, 2\}^2. ~~~~~~~~~~~~~~~~~~~~~~~~~~~~~~~~~~ 
\end{align}
\end{enumerate}

\subsection{Efficient Techniques to Identify Candidate Solutions}\label{appendix:sol_det}
To solve the considered 3D localization problem, an efficient procedure is required to identify the real-valued solutions to the equations \eqref{6_order_equation}, \eqref{4_order_equation_1}, and \eqref{4_order_equation_2}. To this end, let us focus on equation \eqref{6_order_equation} which, denoting the eigenvalue decomposition of $\bC=\bH^T\bH$ by $\bU \mbox{diag}([\lambda_1,\lambda_2,\lambda_3]^T)\bU^T$  with $0\leq\lambda_1\leq\lambda_2\leq\lambda_3$ and after some manipulations, can be rewritten as
\begin{equation}\label{6_order_equation_recast}
    \sum_{j=1}^3 \displaystyle{ \frac{|z_j|^2}{(\bar{\lambda} + \lambda_j)^2}}={\bar{b}_0^2},
\end{equation}
where $\bz=\bU^T \by$ and $\by =\bH^T \bg $. Remarkably, since the eigenvalues and eigenvectors of $\bC$ can be computed through elementary functions of the entries of $\bH$, the parameters involved in (\ref{6_order_equation_recast}) are available in closed-form. 
Evidently, solving \eqref{6_order_equation_recast} is tantamount to determining the roots of  
\begin{equation}\label{6_order_function_normalized}
    \bar{f}(\bar \lambda) =  \frac{\bar{z}^2_1}{(\bar\lambda + \lambda_1)^2} + \frac{\bar{z}^2_2}{(\bar\lambda + \lambda_2)^2} + \frac{\bar{z}^2_3}{(\bar\lambda + \lambda_3)^2}-1,
\end{equation}
where $\bar{z}_j=\frac{z_j}{{|\bar{b}_0|}}, j=1,2,3$.
To proceed further, let us observe that $\bar{f}(\bar \lambda)$ is strictly convex within each of the four intervals
$\mathcal{J}_1 = (-\infty,-\lambda_3),$
$\mathcal{J}_2 = (-\lambda_3, -\lambda_2),$
$\mathcal{J}_3 = (-\lambda_2, -\lambda_1),$ and
$\mathcal{J}_4 = (-\lambda_1,+\infty)$,
being the second-order derivative of $\bar{f}(\bar \lambda)$ always positive. Besides, $\bar{f}(\bar\lambda)$ is strictly increasing over $\mathcal{J}_1$ and strictly decreasing over $\mathcal{J}_4$, with $\mbox{lim}_{\bar{\lambda}\rightarrow\mp\infty}\bar{f}(\bar{\lambda})=-1$. Leveraging the above results, it follows that
\begin{itemize}
    \item there exists a unique root of (\ref{6_order_function_normalized}) within  $\mathcal{J}_1$ and another one (still unique) over $\mathcal{J}_4$;
    \item with reference to the intervals $\mathcal{J}_2$ and $\mathcal{J}_3$, the existence of roots depends on the minimum value $v^*_i$ of \eqref{6_order_function_normalized} within $\mathcal{J}_2$ and $\mathcal{J}_3$, respectively. In particular, if $v^*_i> 0$ then \eqref{6_order_function_normalized} does not admit roots belonging to $\mathcal{J}_2$ (or $\mathcal{J}_3$), otherwise there exist two roots if  $v^*_i<0$ and a unique one if $v^*_i=0$.
\end{itemize}
Hence, the unique roots in $\mathcal{J}_1$ and $\mathcal{J}_4$, can be computed via the  bisection algorithm \cite{Analisi_Numerica}. As to the intervals $\mathcal{J}_2$ and $\mathcal{J}_3$, a two-step strategy is now illustrated. At the first stage, the global minimum solution $\bar{\lambda}_i^*$, over $\mathcal{J}_i, i=2,3$,  and the corresponding objective value $v_i^*$ are determined resorting to the bisection method applied over $\mathcal{J}_i, i=2,3$, to the first-order derivative of $\bar{f}(\bar \lambda) $, i.e,
\begin{equation}
\bar{f}'(\bar \lambda)=  -2\left(\frac{\bar{z}^2_1}{(\bar \lambda + \lambda_1)^3} + \frac{\bar{z}^2_2}{(\bar \lambda+ \lambda_2)^3} + \frac{\bar{z}^2_3}{(\bar \lambda + \lambda_3)^3}\right).
\end{equation}
Then, the possible roots of \eqref{6_order_function_normalized} are searched. Specifically, if $v_i^*<0$ two distinct roots exist, $\bar{\lambda}_{i,1} < \bar{\lambda}_{i,2}$ say, which are obtained carrying out the bisection method to the function $\bar{f}(\bar \lambda) $ over the intervals $(- \lambda_{3},\bar{\lambda}_i^*)$ and $(\bar{\lambda}_i^*, - \lambda_2)$ (or $(- \lambda_{2},\bar{\lambda}_i^*)$ and $(\bar{\lambda}_i^*, - \lambda_1)$), respectively. Otherwise, the root is either $\bar{\lambda}_i^*$ or does not exist, if  $v_i^* > 0$.
Before proceeding further two important remarks are now in order:\\
{\bf Remark 1.} Denoting by $\epsilon$ the desired accuracy level for the bisection method,  $\bar{\lambda}_i^*$ and $v_i^*$ may differ from the bisection output $\hat{\bar{\lambda}}_i^*$ and the corresponding objective value $\hat{v}_i^*$ at most by $\epsilon/2$ and $|\bar{f}'(\hat{\bar{\lambda}}_i^*)|\epsilon/2$, respectively. Now, if
$\hat{v}_i^*-|\bar{f}'(\hat{\bar{\lambda}}_i^*)|\epsilon/2>0$ it is guaranteed the absence of roots. Otherwise, even if $\hat{v}_i^*>0$, possible roots may exist, whose $\epsilon$-approximation can be evaluated leveraging $\hat{\bar{\lambda}}_i^*$. Indeed, depending on the sign of $\bar{f}'(\hat{\bar{\lambda}}_i^*)$, the potential roots (if existing) must belong to either $[\hat{\bar{\lambda}}_i^*-\epsilon/2,\hat{\bar{\lambda}}_i^*]$ or $[\hat{\bar{\lambda}}_i^*,\hat{\bar{\lambda}}_i^*+\epsilon/2]$. As a result, either the pair $(\hat{\bar{\lambda}}_i^*-\epsilon/2,\hat{\bar{\lambda}}_i^*)$  or $(\hat{\bar{\lambda}}_i^*,\hat{\bar{\lambda}}_i^*+\epsilon/2)$ can be used to compute candidate optimal solutions with a desired accuracy, which will be automatically discarded, during the screening of the candidates, if the roots do not exist. \\
{\bf Remark 2.} According to the proposed strategy, it is required to execute, in general, three times the bisection method, once at the first stage and twice at the second. However, it is possible to avoid the first stage and determine the two potential roots with at most two bisection cycles, conceiving a bisection-like method: at each iteration, it jointly accounts for the sign of the derivative in correspondence of the two  extremes of the current bisection interval as well as the objective value at the center of the mentioned interval, to update the extremes.

Following the same guideline, the solutions of equations \eqref{4_order_equation_1} {and} \eqref{4_order_equation_2} can be  obtained. {It is also worth observing that} \eqref{4_order_equation_1} and \eqref{4_order_equation_2} could be, in principle, solved in closed-form. However, numerical errors have been experienced {demanding the development} of the aforementioned {numerically robust solution} method.

In the next subsection, details on the bisection initialization are illustrated.
\subsubsection{Bisection Initialization}\label{equation_resolution}
Without loss generality, let us focus on equation \eqref{6_order_equation}. To this end, let us first consider the root search over $\mathcal{J}_1$. Being
\begin{equation}\label{majorant}
   \bar{f}(\bar\lambda) \leq   \frac{\|\bar{\bz}\|^2}{(\bar\lambda + \lambda_3)^2} - 1,\,\,\,\bar\lambda\leq -\lambda_3
\end{equation}
and 
\begin{equation}\label{minorant}
    \bar{f}(\bar\lambda) \geq    \frac{\bar{z}^2_3}{(\bar\lambda + \lambda_3)^2} - 1,\,\,\,\bar\lambda\leq -\lambda_3
\end{equation}
with $\bar{\bz}=[\bar{z}_1,\bar{z}_2,\bar{z}_3]^T$, it follows that the root $\bar{\lambda}_3\in\mathcal{J}_1$ of \eqref{6_order_equation} complies with
$$\bar{\lambda}_3\in[-\lambda_3 - \|\bar{\bz}\|,-\lambda_3 - |\bar{z}_3|].$$
As a consequence, the bisection method can be initialized with the search interval $[-\lambda_3 - \|\bar{\bz}\|,-\lambda_3 - |\bar{z}_3|]$. Leveraging a similar line of reasoning, it stems that  $[-\lambda_1 - \|\bar{\bz}\|,-\lambda_1 - |\bar{z}_1|]$ can be used to initialize the bisection method over the interval $\mathcal{J}_4$.

As to the roots lying within $\mathcal{J}_2$ (analogous reasoning applies for $\mathcal{J}_3$) the initialization at the first stage can be set as $[-\lambda_{3},-(\lambda_{3}+\lambda_{2})/2]$, if $\bar{f}'(-(\lambda_{3}+\lambda_{2})/2)>0$ or $[-(\lambda_{3}+\lambda_{2})/2, -\lambda_{2}]$, if $\bar{f}'(-(\lambda_{3}+\lambda_{2})/2)<0$. The second step, as already said, substantially employs $[-\lambda_{3},\bar{\lambda}_i^*]$ and $[\bar{\lambda}_i^*, -\lambda_{2}]$
to initialize the two bisections.

\bibliographystyle{IEEEtran}
\bibliography{IEEEabrv,ms}

\begin{thebibliography}{10}
\providecommand{\url}[1]{#1}
\csname url@samestyle\endcsname
\providecommand{\newblock}{\relax}
\providecommand{\bibinfo}[2]{#2}
\providecommand{\BIBentrySTDinterwordspacing}{\spaceskip=0pt\relax}
\providecommand{\BIBentryALTinterwordstretchfactor}{4}
\providecommand{\BIBentryALTinterwordspacing}{\spaceskip=\fontdimen2\font plus
\BIBentryALTinterwordstretchfactor\fontdimen3\font minus
  \fontdimen4\font\relax}
\providecommand{\BIBforeignlanguage}[2]{{%
\expandafter\ifx\csname l@#1\endcsname\relax
\typeout{** WARNING: IEEEtran.bst: No hyphenation pattern has been}%
\typeout{** loaded for the language `#1'. Using the pattern for}%
\typeout{** the default language instead.}%
\else
\language=\csname l@#1\endcsname
\fi
#2}}
\providecommand{\BIBdecl}{\relax}
\BIBdecl

\bibitem{sensors_as_robots}
M.~C. {Wicks}, ``{Radar the Next Generation - Sensors as Robots},'' in
  \emph{International Radar Conference RADAR 2003}, Adelaide, Australia, 3‒5
  Sept. pp. 8‒14, 2003.

\bibitem{mimo_book}
{J. Li and P. Stoica}, \emph{{MIMO Radar Signal Processing}}.\hskip 1em plus
  0.5em minus 0.4em\relax J. Wiley $\&$ Sons, Hoboken, NJ, 2009.

\bibitem{bistatic_book}
{ N. J. Willis and H. D. Griffiths}, \emph{{Advances in Bistatic Radar.}}\hskip
  1em plus 0.5em minus 0.4em\relax SciTech Pub., Raleigh, NC, 2007.

\bibitem{multistatic_jamming}
M.~R. {Inggs}, C.~{Tong}, D.~{O'Hagan}, U.~{B$\mbox{\"o}$nigert},
  U.~{Siegenthaler}, C.~{Schupbach}, and H.~{Pratisto}, ``{Noise Jamming of a
  FM Band Commensal Radar},'' in \emph{2015 IEEE Radar Conference}, pp.
  493‒498, 27‒30 Oct. 2015.

\bibitem{multiplatform_workshop}
D.~O‘Hagan and M.~Bockmair, ``{EDA Workshop on Multi-Platform RF Systems
  (MPRFS)},'' \emph{IEEE Aerospace and Electronic Systems Magazine}, 18 June
  2020.

\bibitem{intelligent_radar_network}
H.~Griffiths, ``{The Concept of the Intelligent Radar Network},'' in
  \emph{Novel Radar Techniques and Application}.\hskip 1em plus 0.5em minus
  0.4em\relax {R. Klemm, U. Nickel, C. Gierull, P. Lombardo, H. Griffiths, W.
  Koch, Scitech publishing}, vol. 2, ch. 8, pp. 233‒249, 2017.

\bibitem{sync}
M.~A. Lombardi, ``{The Use of GPS Disciplined Oscillators as Primary Frequency
  Standards for Calibration and Metrology Laboratories},'' \emph{NCSLI Measure:
  The Journal of Measurement Science}, vol.~3, pp. 56--65, Aug. 2008.

\bibitem{sync_netted_radar}
{J.S. Sandenbergh and M.R. Inggs}, ``{A Common View GPSDO to Synchronize Netted
  Radar},'' in \emph{IET International Radar Conference RADAR 2007}, Edinburgh,
  pp. 1‒5, 15–18 Oct. 2007.

\bibitem{sync_oscillators}
J.~{Sandenbergh} and M.~{R. Inggs}, ``{Synchronizing Network Radar Using
  Allin-View GPS-Disciplined Oscillators},'' in \emph{IEEE Radar Conference
  2017}, Seattle, WA, 9‒12 May 2017.

\bibitem{sync_timing}
{M.R. Inggs, J.S. Sandenbergh and S.A.C. Lewis}, ``{Investigation of White
  Rabbit for Synchronization and Timing of Netted Radar},'' in \emph{IEEE
  International Conference RADAR 2015}, Arlington, VA, pp. 214‒217, 11‒14
  May 2015.

\bibitem{sensor_management}
A.~Charlish and R.~Nadjiasngar, ``{Sensor Management for Radar Networks},'' in
  \emph{Novel Radar Techniques and Application}.\hskip 1em plus 0.5em minus
  0.4em\relax {R. Klemm, U. Nickel, C. Gierull, P. Lombardo, H. Griffiths, W.
  Koch, Scitech publishing}, vol. 2, ch. 15, pp. 457‒489, 2017.

\bibitem{Distributed_MultiPlatform}
M.~E. {Liggins} and {C.Y. Chong}, ``{Distributed Multi-Platform Fusion for
  Enhanced Radar Management},'' in \emph{IEEE National Radar Conference},
  Syracuse, NY, USA, pp. 115–119, 13‒15 May 1997.

\bibitem{quality_of_service}
R.~{Nadjiasngar} and A.~B. {Charlish}, ``{Quality of Service Resource
  Management for a Radar Network},'' in \emph{IEEE Radar Conference},
  Johannesburg, 27‒30 Oct. pp. 1‒6, 2015.

\bibitem{adaptive_radar_network}
H.~D. {Griffiths} and C.~J. {Baker}, ``{Towards the Intelligent Adaptive Radar
  Network},'' in \emph{IEEE Radar Conference 2013}, Ottawa, pp. 1‒5, 30
  Apr.‒2 May 2013.

\bibitem{multiplatform_air_surveillance}
H.~Fargetton, ``{Radar Multi-Platform System for Air Surveillance},'' in
  \emph{Novel Radar Techniques and Application}.\hskip 1em plus 0.5em minus
  0.4em\relax {R. Klemm, U. Nickel, C. Gierull, P. Lombardo, H. Griffiths, W.
  Koch, Scitech publishing}, vol. 2, ch. 8, pp. 233‒249, 2017.

\bibitem{Adaptive_Waveform_Multistatic}
{N. H. Nguyen, K. Doğançay, L. M. Davis}, ``{Adaptive Waveform and Cartesian
  Estimate Selection for Multistatic Target Tracking},'' \emph{Signal
  Processing}, vol. 111, pp. 13--25, Sept. 2015.

\bibitem{Adaptive_Waveform_Multistatic_ieee}
N.~H. {Nguyen}, K.~{Dogancay}, and L.~M. {Davis}, ``{Adaptive Waveform
  Selection for Multistatic Target Tracking},'' \emph{IEEE Transactions on
  Aerospace and Electronic Systems}, vol.~51, no.~1, pp. 688--701, Jan. 2015.

\bibitem{NetRad}
{T. E. {Derham}, S. {Doughty}, K. {Woodbridge}, and C. J. {Baker}}, ``{Design
  and Evaluation of a Low-cost Multistatic Netted Radar System},'' \emph{IET
  Radar, Sonar \& Navigation}, vol.~1, no.~5, pp. 362--368, Oct. 2007.

\bibitem{NETRAD_UAV}
F.~{Hoffmann}, M.~{Ritchie}, F.~{Fioranelli}, A.~{Charlish}, and
  H.~{Griffiths}, ``{Micro-Doppler Based Detection and Tracking of UAVs with
  Multistatic Radar},'' in \emph{2016 IEEE Radar Conference (RadarConf)}, pp.
  1‒6, 2‒6 May 2016.

\bibitem{nextrad}
M.~R. {Inggs}, S.~{Lewis}, R.~{Palamà}, M.~A. {Ritchie}, and H.~{Griffiths},
  ``{Report on the 2018 Trials of the Multistatic NeXtRAD Dual Band
  Polarimetric Radar},'' in \emph{2019 IEEE Radar Conference (RadarConf)},
  Boston, MA, USA, pp. 1‒6, 22‒26 Apr. 2019.

\bibitem{nextrad2}
P.~{Beasley}, M.~{Ritchie}, H.~{Griffiths}, W.~{Miceli}, M.~R. {Inggs},
  S.~{Lewis}, and B.~{Kahn}, ``{Multistatic Radar Measurements of UAVs at
  X-band and L-band},'' in \emph{2020 IEEE Radar Conference (RadarConf20)}, pp.
  1‒6, 21‒25 Sept. 2020.

\bibitem{passive_multistatic}
M.~{Malanowski} and K.~{Kulpa}, ``{Two Methods for Target Localization in
  Multistatic Passive Radar},'' \emph{IEEE Transactions on Aerospace and
  Electronic Systems}, vol.~48, no.~1, pp. 572--580, Jan. 2012.

\bibitem{AOA_multistatic}
J.~{Wen}, J.~{Li}, C.~{Yang}, C.~{Chen}, and H.~{Chen}, ``{Localization Scheme
  of Multistatic Radars System Based on the Information of Measured Signal},''
  \emph{2014 Ninth International Conference on Broadband and Wireless
  Computing, Communication and Applications}, pp. 462--466, Nov. 2014.

\bibitem{localization_moving_target_MIMO_noncoherent}
H.~{Yang} and J.~{Chun}, ``{An Improved Algebraic Solution for Moving Target
  Localization in Noncoherent MIMO Radar Systems},'' \emph{IEEE Transactions on
  Signal Processing}, vol.~64, no.~1, pp. 258--270, Jan. 2016.

\bibitem{localization_moving_target_MIMO}
L.~{Wanchun}, T.~{Qiu}, H.~{Chengfeng}, and L.~{Yingxiang}, ``{Location
  Algorithms for Moving Target in Non-Coherent Distributed Multiple-Input
  Multiple-Output Radar Systems},'' \emph{IET Signal Processing}, vol.~11,
  no.~5, pp. 503--514, July 2017.

\bibitem{acls}
A.~{Aubry}, V.~{Carotenuto}, A.~{De Maio}, and L.~{Pallotta}, ``{Localization
  in 2D PBR With Multiple Transmitters of Opportunity: A Constrained Least
  Squares Approach},'' \emph{IEEE Transactions on Signal Processing}, vol.~68,
  pp. 634--646, Jan. 2020.

\bibitem{aacls}
{A. {Marino}, A. {Aubry}, A. De {Maio} and P. {Braca}}, ``{2D Constrained PBR
  Localization Via Active Radar Designation},'' in \emph{2020 IEEE Radar
  Conference (RadarConf20)}, pp. 1‒6, 21‒25 Sept. 2020.

\bibitem{AACLS_journal}
A.~{Aubry}, P.~{Braca}, A.~{De Maio}, and A.~{Marino}, ``{2D PBR Localization
  Complying with Constraints Forced by Active Radar Measurements},'' \emph{IEEE
  Transactions on Aerospace and Electronic Systems}, pp. 1--1, 2021.

\bibitem{Bertsekas16}
D.~Bertsekas, \emph{Nonlinear Programming}.\hskip 1em plus 0.5em minus
  0.4em\relax Athena Scientific, 2016.

\bibitem{TSE}
J.~{Shen}, A.~{Molisch}, and J.~{Salmi}, ``{Accurate Passive Location
  Estimation Using TOA Measurements},'' \emph{IEEE Transactions on Wireless
  Communications}, vol.~11, no.~6, pp. 2182--2192, June 2012.

\bibitem{Multistatic_joint_localization}
Y.~{Zhang} and K.~C. {Ho}, ``{Multistatic Localization in the Absence of
  Transmitter Position},'' \emph{IEEE Transactions on Signal Processing},
  vol.~67, no.~18, pp. 4745--4760, Sept. 2019.

\bibitem{G_trus_region}
J.~More, ``{Generalizations of the Trust Region Subproblem},'' \emph{Optim.
  Methods Softw.}, vol.~2, pp. 189--209, Aug. 1993.

\bibitem{Richards}
{M. A. Richards, J. A. Scheer, and W. A. Holm}, \emph{{Principles of Modern
  Radar: Basic Principles}}.\hskip 1em plus 0.5em minus 0.4em\relax Schitech
  Publishing, 2010.

\bibitem{vantrees}
H.~L. Van~Trees, \emph{{Detection, Estimation, and Modulation Theory}}.\hskip
  1em plus 0.5em minus 0.4em\relax John Wiley \& Sons., 2004.

\bibitem{Analisi_Numerica}
{J. P. T. Mo, S.C.P. Cheung, and R. Das}, ``Chapter 2 - basic numerical
  techniques,'' in \emph{Demystifying Numerical Models}.\hskip 1em plus 0.5em
  minus 0.4em\relax {J. P. T. Mo, S.C.P. Cheung, and R. Das,
  Butterworth-Heinemann}, pp. 5‒31, 2019.

\end{thebibliography}

\end{document}